\newtheorem{theorem}{Theorem}[section]
\newtheorem{lemma}[theorem]{Lemma}
\newenvironment{proof}{\par\noindent\textit{Proof.}}{$\Box$\par\bigskip\par}
\newenvironment{proofof}[1]{\par\noindent\textit{Proof of #1.}}{$\Box$\par\bigskip\par}
\newcommand{\OPT}{\operatorname{OPT}}
\newcommand{\opt}{\OPT(k, V \setminus E)}
\newcommand{\optY}{\OPT(k, Y)}
\newcommand{\optSE}{\OPT(k, S \setminus E)}
\newcommand{\optB}{\OPT(k, B_{i^{*},j} \setminus E)}
\newcommand{\BU}{{\bigcup_{i=0}^{\ceillogk} (B_i \setminus E_i)}}
\newcommand{\BUMO}{{\bigcup_{i=0}^{\ceillogk - 1} (B_i \setminus E_i)}}
\newcommand{\bbRp}{{\mathbb{R}_{\ge 0}}}
\newcommand{\bbNp}{{\mathbb{N}_{+}}}
\newcommand{\RALG}{\textsc{PRo}\xspace}
\newcommand{\GREEDY}{\textsc{Greedy}\xspace}
\newcommand{\SIEVE}{\textsc{Sieve-Streaming}\xspace}
\newcommand{\bbZ}{\mathbb{Z}}
\newcommand{\bbR}{\mathbb{R}}
\newcommand{\cI}{\mathcal{I}}
\newcommand{\cS}{\mathcal{S}}
\newcommand{\emin}{e_{\rm{min}}}
\newcommand{\istar}{{i^{\star}}}
\newcommand{\ceillogk}{{\lceil \log{k} \rceil}}
\newcommand{\logepsk}{\log_{1 + \epsilon}{k}}
\newcommand{\margain}[2]{f\left( #1\; \middle| \; #2 \right)}
\newcommand{\bucketmul}{w}
\newcommand{\tE}{\tilde{E}}
\newcommand{\ab}[1]{\left<{#1}\right>} 
\newcommand{\OALGtext}{\textsc{STAR-T}}
\newcommand{\OALG}{\OALGtext\xspace}
\newcommand{\OALGS}{\textsc{STAR-T-Sieve}\xspace}
\newcommand{\OALGG}{\textsc{STAR-T-Greedy}\xspace}
\newcommand{\RANDOM}{\textsc{Random}\xspace}
\DeclareMathOperator*{\argmax}{argmax}
\title{Streaming Robust Submodular Maximization: \\
	A Partitioned Thresholding Approach}
\author{
  Slobodan Mitrovi{\' c} \thanks{e-mail: slobodan.mitrovic@epfl.ch} \\ EPFL
  \and
  Ilija Bogunovic \thanks{e-mail: ilija.bogunovic@epfl.ch}\\ EPFL 
  \and
  Ashkan Norouzi-Fard \thanks{e-mail: ashkan.norouzifard@epfl.ch} \\ EPFL 
  \and
  Jakub Tarnawski \thanks{e-mail: jakub.tarnawski@epfl.ch} \\ EPFL 
  \and
  Volkan Cevher \thanks{e-mail: volkan.cevher@epfl.ch} \\ EPFL
}
\begin{document}

\maketitle

\begin{abstract}
We study the classical problem of maximizing a monotone submodular function subject to a cardinality constraint $k$, with two additional twists: 
(i) elements arrive in a streaming fashion, and (ii) $m$ items from the algorithm's memory are removed after the stream is finished.
We develop a \emph{robust} submodular algorithm $\OALG$. It is based on a novel partitioning structure and an exponentially decreasing thresholding rule. $\OALG$ makes one pass over the data and retains a short but robust summary.
We show that after the removal of any $m$ elements from the obtained summary, a simple greedy algorithm \OALG-\GREEDY that runs on the remaining elements achieves a constant-factor approximation guarantee. In two different data summarization tasks, we demonstrate that it matches or outperforms
existing greedy and streaming methods, even if they are allowed the benefit of knowing the removed subset in advance.
\end{abstract}

\section{Introduction}

A central challenge
in many large-scale machine learning tasks
is data summarization -- the extraction of a small representative subset out of a large dataset. Applications include image and document summarization~\cite{tschiatschek2014learning,lin2011class}, influence maximization~\cite{kempe2003maximizing}, facility location~\cite{lindgren2016leveraging}, exemplar-based clustering~\cite{krause2010budgeted}, recommender systems~\cite{el2011beyond}, and many more. Data summarization can often be formulated as the problem of maximizing a \emph{submodular} set function subject to a cardinality constraint.

On small datasets, a popular algorithm is the simple greedy method~\cite{nemhauser1978analysis},
which produces solutions provably close to optimal.
Unfortunately, it requires repeated access to all elements,
which makes it infeasible for large-scale scenarios,
where the entire dataset does not fit in the main memory.
In this setting, streaming algorithms prove to be useful,
as they make only a small number of passes over the data
and use sublinear space.

In many settings, the extracted representative set
is also required to be \emph{robust}.
That is, the objective value should degrade as little as
possible when some elements of the set are removed.
Such removals may arise for any number of reasons,
such as
failures of nodes in a network,
or
user preferences which the model failed to account for;
they could even be adversarial in nature.

A robustness requirement is especially
challenging
for large datasets,
where it is prohibitively expensive to reoptimize over the entire data collection
in order to find replacements for the removed elements.
In some applications, where data is produced so rapidly
that most of it is not being stored, such a search for replacements may not be possible at all.

These requirements lead to the following two-stage setting. In the first stage, we wish to solve the \emph{robust streaming submodular maximization} problem -- one of finding a small representative subset of elements that is robust against any possible removal of up to $m$ elements. In the second, \emph{query} stage, after an arbitrary removal of $m$ elements from the summary obtained in the first stage, the goal is to return a representative subset, of size at most $k$, using only the precomputed summary rather than the entire dataset. 

For example, (i) in \emph{dominating set} problem (also studied under influence maximization) we want to efficiently (in a single pass) compute a compressed but robust set of influential users in a social network (whom we will present with free copies of a new product),  (ii) in \emph{personalized movie recommendation} we want to efficiently precompute a robust set of user-preferred movies. Once we discard those users who will not spread the word about our product, we should find a new set of influential users in the precomputed robust summary. Similarly, if some movies turn out not to be interesting for the user, we should still be able to provide good recommendations by only looking into our robust movie summary.

\paragraph{Contributions.}
In this paper, we propose a two-stage procedure for robust submodular maximization. For the first stage, we design a streaming algorithm 
which makes one pass over the data and finds a summary that is robust against removal of up to $m$ elements, while containing at most $O\left((m \log k + k) \log^2 k \right)$ elements.

In the second (query) stage, given any set of size $m$ that has been removed from the obtained summary, we use a simple greedy algorithm 
that runs on the remaining elements and produces a solution of size at most $k$
(without needing to access the entire dataset).
We prove that this solution satisfies a constant-factor approximation guarantee.

Achieving this result requires novelty in the algorithm design as well as the analysis.
Our streaming algorithm uses a structure where the constructed
summary is arranged into partitions consisting of
buckets whose sizes increase exponentially with the
partition index. Moreover, buckets in different partitions are associated with greedy thresholds, which decrease exponentially with the partition index.  
Our analysis exploits and combines the properties of the described robust structure and decreasing greedy thresholding rule.

In addition to algorithmic and theoretical contributions, we also demonstrate in several practical scenarios that our procedure matches (and in some cases outperforms) the \SIEVE algorithm~\cite{badanidiyuru2014streaming} (see Section~\ref{experiments_section}) -- even though we allow the latter to know in advance which elements will be removed from the dataset.

\section{Problem Statement}
We consider a potentially large universe of elements $V$ of size $n$ equipped with a \emph{normalized monotone submodular} set function $f:2^V \to \bbRp$ defined on $V$.
We say that $f$ is \emph{monotone} if for any two sets $X \subseteq Y \subseteq V$ we have $f(X) \leq f(Y)$. The set function $f$ is said to be \emph{submodular} if for any two sets $X \subseteq Y \subseteq V$ and any element $e \in V \setminus Y$ it holds that
\[
    f(X \cup \lbrace e \rbrace) - f(X) \ge f(Y \cup \lbrace e \rbrace) - f(Y).
\]
We use $\margain{Y}{X}$ to denote the marginal gain in the function value due to adding the elements of set $Y$ to set $X$, i.e. $\margain{Y}{X} := f(X \cup Y) - f(X)$. We say that $f$ is \emph{normalized} if $f(\emptyset) = 0$.

The problem of maximizing a monotone submodular function subject to a cardinality constraint, i.e.,
\begin{equation}
    \label{eq:card_prblm}
    \max_{Z \subseteq V, |Z| \leq k} f(Z),
\end{equation}
has been studied extensively. It is well-known that a simple greedy algorithm (henceforth refered to as \GREEDY)~\cite{nemhauser1978analysis}, which starts from an empty set and then iteratively adds the element with highest marginal gain, provides a $(1 - e^{-1})$-approximation. However, it requires repeated access to all elements of the dataset, which precludes it from use in large-scale machine learning applications.

We say that a set $S$ is \emph{robust} for a parameter $m$ if, for any set $E \subseteq V$ such that $|E| \leq m$, there is a subset $Z \subseteq S \setminus E$ of size at most $k$ such that
\[
	f(Z) \geq c f(\opt),
\]
where $c > 0$ is an approximation ratio. We use $\opt$ to denote the optimal subset of size $k$ of $V \setminus E$ (i.e., after the removal of elements in $E$):
\[
	\opt \in \argmax_{Z \subseteq V \setminus E, |Z| \leq k} f(Z).
\]
In this work, we are interested in solving a robust version of Problem~\eqref{eq:card_prblm} in the setting that consists of the following two stages: (i) \emph{streaming} and (ii) \emph{query} stage. 

In the streaming stage, elements from the ground set $V$ arrive in a streaming fashion in an arbitrary order. Our goal is to design a one-pass streaming algorithm that has oracle access to $f$ and retains a small set $S$ of elements in memory.
In addition, we want $S$ to be a robust summary, i.e., $S$ should both contain elements that maximize the objective value, and be robust against the removal of prespecified number of elements $m$. In the query stage, after any set $E$ of size at most $m$ is removed from $V$, the goal is to return a set $Z \subseteq S \setminus E$ of size at most $k$ such that $f(Z)$ is maximized.

\begin{figure}[t!]
    \centering
    \includegraphics[scale=0.2]{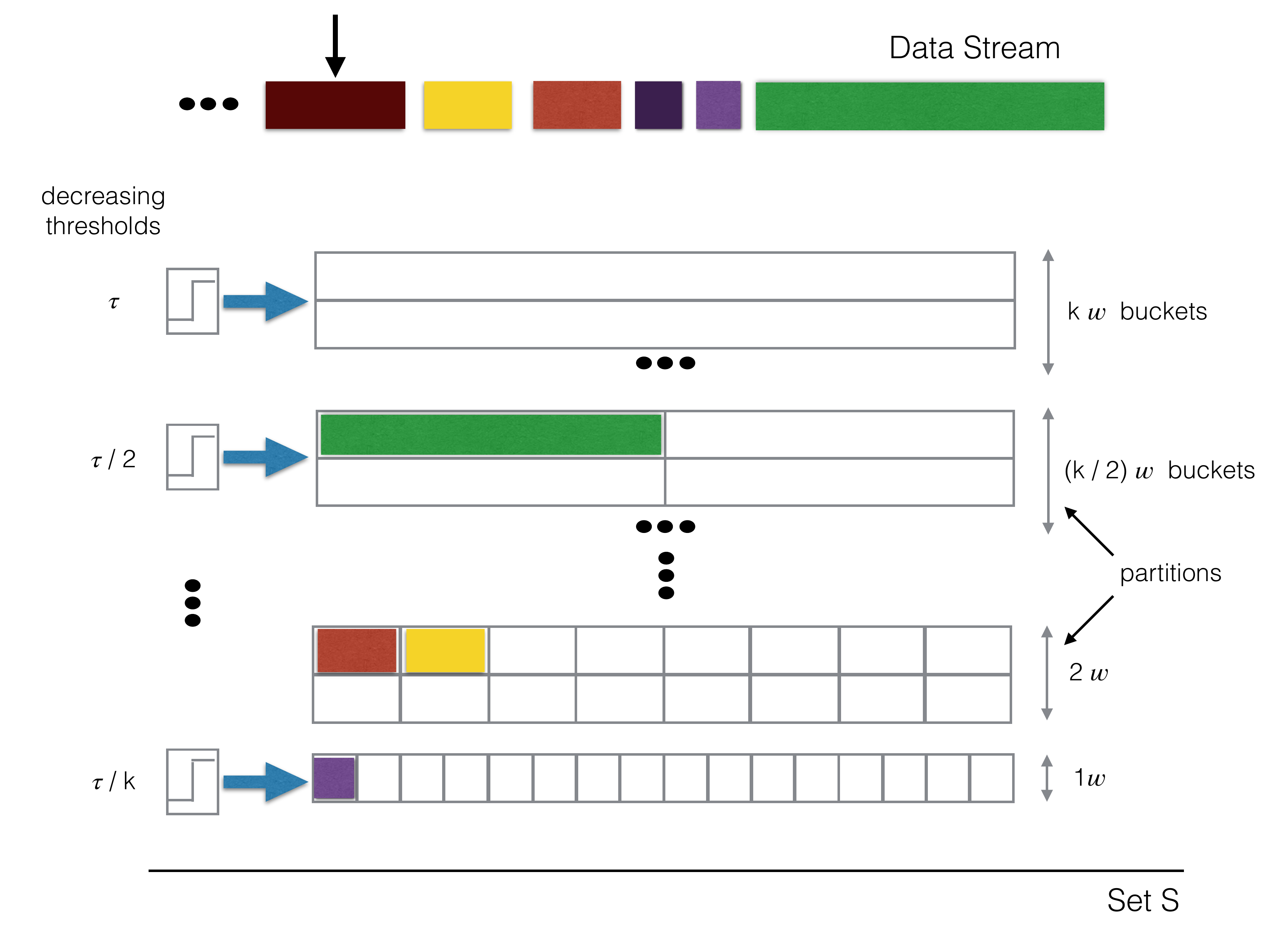}
    \caption{Illustration of the set $S$ returned by \OALG. It consists of $\ceillogk + 1$ partitions such that each partition $i$ contains $\bucketmul  \lceil k / 2^i \rceil$ buckets of size $2^i$ (up to rounding). Moreover, each partition $i$ has its corresponding threshold $\tau / 2^i$.
    \label{fig-S}}
\end{figure}

\textbf{Related work.}
A robust, non-streaming version of Problem~\eqref{eq:card_prblm} was first introduced in~\cite{krause2008robust}.
In that setting, the algorithm must output a set $Z$ of size $k$ which maximizes the smallest objective value guaranteed to be obtained after a set of size $m$ is removed, that is,

\begin{equation}
\max_{Z \subseteq V, |Z| \leq k} \  \min_{E \subseteq Z, |E|\leq{m}}\  f(Z\setminus E).
\nonumber
\end{equation}

The work \cite{orlin2016robust} provides the first constant ($0.387$) factor approximation result to this problem, valid for $m = o( \sqrt{k} )$. Their solution consists of buckets of size $O(m^2 \log k)$ that are constructed greedily, one after another. Recently, in~\cite{bogunovic2017robust}, a centralized algorithm \RALG has been proposed that achieves the same approximation result and allows for a greater robustness $m = o(k)$. \RALG constructs a set that is arranged into partitions consisting of buckets whose sizes increase exponentially with the partition index. In this work, we use a similar structure for the robust set but, instead of filling the buckets greedily one after another, we place an element in the first bucket for which the gain of adding the element is above the corresponding threshold. Moreover, we introduce a novel analysis that allows us to be robust to any number of removals $m$ as long as we are allowed to use $O(m \log^2 k)$ memory.

Recently, submodular streaming algorithms (e.g. \cite{krause2010budgeted},~\cite{kumar2015fast} and~\cite{norouzi2016efficient}) have become a prominent option for scaling submodular optimization to large-scale machine learning applications. A popular submodular streaming algorithm \SIEVE \cite{badanidiyuru2014streaming} solves Problem~\eqref{eq:card_prblm} by performing one pass over the data, and achieves a $(0.5 - \epsilon)$-approximation while storing at most $O\left(\tfrac{k \log k}{\epsilon}\right)$ elements.

Our algorithm extends the algorithmic ideas of \SIEVE, such as greedy thresholding, to the robust setting. In particular, we introduce a new exponentially decreasing thresholding scheme that, together with an innovative analysis, allows us to obtain a constant-factor approximation for the robust streaming problem.

Recently, robust versions of submodular maximization have been considered in the problems of influence maximization (e.g, \cite{kempe2003maximizing}, ~\cite{chen2016robust}) and budget allocation (\cite{staibrobust}). Increased interest in interactive machine learning methods has also led to the development of interactive and adaptive submodular optimization (see e.g.~\cite{golovin2011adaptive},~\cite{guillory2010interactive}). Our procedure also contains the interactive component, as we can compute the robust summary only once and then provide different sub-summaries that correspond to multiple different removals (see Section~\ref{section:movie-exp}).

Independently and concurrently with our work, \cite{mirzasoleiman2017deletion} gave a streaming algorithm for robust submodular maximization under the cardinality constraint. Their approach provides a $1/2 - \varepsilon$ approximation guarantee. However, their algorithm uses $O(m k \log{k} / \varepsilon)$ memory. While the memory requirement of their method increases linearly with $k$, in the case of our algorithm this dependence is logarithmic.
\section{A Robust Two-Stage Procedure}
Our approach consists of the streaming Algorithm~\ref{alg:creating-S}, which we call Streaming Robust submodular algorithm with Partitioned Thresholding (\OALG). This algorithm is used in the streaming stage, while Algorithm~\ref{alg:final-output}, which we call \OALG -\GREEDY, is used in the query stage. 

As the input, \OALG requires a non-negative monotone submodular function $f$, cardinality constraint $k$, robustness parameter $m$ and thresholding parameter $\tau$. The parameter $\tau$ is an $\alpha$-approximation to $f(\opt)$, for some $\alpha \in (0, 1]$ to be specified later. Hence, it depends on $f(\opt)$, which is not known a priori. For the sake of clarity, we present the algorithm as if $f(\opt)$ were known, and in Section~\ref{section:OPT-not-needed} we show how $f(\opt)$ can be approximated. The algorithm makes one pass over the data and outputs a set of elements $S$ that is later used in the query stage in \OALG -\GREEDY.

The set $S$ (see Figure~\ref{fig-S} for an illustration) is divided into $\lceil \log k \rceil + 1$ partitions, where every partition $i \in \lbrace 0, \dots, \lceil \log k \rceil \rbrace$ consists of $\bucketmul \lceil k/2^i \rceil$ buckets $B_{i,j}, j \in \lbrace 1, \dots, w\lceil k / 2^i \rceil \rbrace$. Here, $\bucketmul \in \bbNp$ is a memory parameter that depends on $m$; we use $\bucketmul \geq \left \lceil \tfrac{4 \ceillogk m}{k} \right \rceil$ in our asymptotic theory, while our numerical results show that $\bucketmul = 1$ works well in practice. Every bucket $B_{i,j}$ stores at most $\min\{k, 2^i\}$ elements. If $|B_{i,j}| = \min\{2^i,k\}$, then we say that $B_{i, j}$ is \emph{full}.

Every partition has a corresponding threshold that is exponentially decreasing with the partition index $i$ as $\tau / 2^i$. For example, the buckets in the first partition will only store elements that have marginal value at least $\tau$. Every element $e \in V$ arriving on the stream is assigned to the first non-full bucket $B_{i,j}$ for which the marginal value $\margain{e}{B_{i,j}}$ is at least $\tau / 2^i$. If there is no such bucket, the element will not be stored. Hence, the buckets are disjoint sets that in the end (after one pass over the data) can have a smaller number of elements than specified by their corresponding cardinality constraints, and some of them might even be empty. The set $S$ returned by \OALG is the union of all the buckets.

In the second stage, \OALG-\GREEDY receives as input the set $S$ constructed in the streaming stage, a set $E \subset S$ that we think of as removed elements, and the cardinality constraint $k$. The algorithm then returns a set $Z$, of size at most $k$, that is obtained by running the simple greedy algorithm  $\GREEDY$ on the set $S \setminus E$.
Note that \OALG-\GREEDY can be invoked for different sets $E$. 

\begin{algorithm}[t!]
	\caption{STreAming Robust - Thresholding submodular algorithm (\OALG)}
	\label{alg:creating-S}
	\begin{algorithmic}[1]
		\Require Set $V$, $k$, $\tau$, $\bucketmul \in \bbNp $
		\State $B_{i,j} \gets \emptyset \quad \text{ for all } 0 \leq i \leq \ceillogk \text{ and } 1 \leq j \leq \bucketmul \lceil k / 2^i \rceil $
		\ForAll{element $e$ in the stream}
			\For {$i \gets 0 \textbf{ to } \lceil \log k \rceil$} \Comment{loop over partitions}
				\For {$j \gets 1 \textbf{ to } \bucketmul \lceil k/2^i \rceil$}\label{line:for-buckets} \Comment{loop over buckets}
					\If{$|B_{i,j}| < \min\{2^i, k\}  \textbf{ and } \margain{e}{B_{i,j}} \geq \tau / \min\{2^i, k\}$} \label{line:if-size-and-gain}
						\State $B_{i,j} \gets B_{i,j} \cup \lbrace e \rbrace$
						\State \textbf{break:} proceed to the next element in the stream
					\EndIf
				\EndFor
			\EndFor
		\EndFor
		\State $S \gets \bigcup_{i,j} B_{i,j}$ \\
		\Return $S$
	\end{algorithmic}
\end{algorithm}	
\begin{algorithm}[t!]
	\caption{\OALG - \GREEDY}
	\label{alg:final-output}
	\begin{algorithmic}[1]
		\Require Set $S$, query set $E$ and $k$
		\State $Z \gets \GREEDY (k, S \setminus E)$\\
		\Return $Z$
	\end{algorithmic}
\end{algorithm}

\section{Theoretical Bounds}
In this section we discuss our main theoretical results. We initially assume that the value $f(\opt)$ is known; later, in Section~\ref{section:OPT-not-needed},
we remove this assumption. The more detailed versions of our proofs are given in the supplementary material. We begin by stating the main result.

\begin{theorem}\label{thm:main-theorem}
	Let $f$ be a normalized monotone submodular function defined over the ground set $V$. Given a cardinality constraint $k$ and parameter $m$, for a setting of parameters
	$\bucketmul \ge \left \lceil \tfrac{4 \ceillogk m}{k} \right \rceil$ and 
	\[
		\tau = \tfrac{1}{2 + \tfrac{(1 - e^{-1})}{(1 - e^{-1/3})} \left(1 - \tfrac{1}{\ceillogk}\right)} f(\opt),
	\] \OALG performs a single pass over the data set and constructs a set $S$ of size at most $O((k + m \log k) \log k)$ elements.

	For such a set $S$ and any set $E \subseteq V$ such that $|E| \leq m$, \OALG-\GREEDY yields a set $Z \subseteq S \setminus E$ of size at most $k$ with
	\[
		f(Z) \ge c \cdot f(\opt),
	\]
	for $c = 0.149 \left(1 - \tfrac{1}{\ceillogk}\right)$. Therefore, as $k \rightarrow \infty$, the value of $c$ approaches $0.149$.
	
\end{theorem}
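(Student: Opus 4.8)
The plan is to reduce the approximation claim to a lower bound on the value of the best size-$k$ subset of the \emph{surviving} summary, and then to extract this bound from the two features of the construction: full buckets are guaranteed to be valuable, while every stream element the algorithm discards must have had small marginal gain against the non-full buckets.

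First I would handle the cheap ingredients. The cardinality bound follows by summing bucket capacities, $|S|\le\sum_{i=0}^{\ceillogk}\bucketmul\lceil k/2^i\rceil\min\{2^i,k\}=O(\bucketmul\,k\,\ceillogk)$, and inserting $\bucketmul=\lceil 4\ceillogk m/k\rceil$ gives the claimed $O((k+m\log k)\log k)$. For the quality guarantee, since \OALG-\GREEDY simply runs \GREEDY on $S\setminus E$, the classical $(1-e^{-1})$ analysis gives $f(Z)\ge(1-e^{-1})f(\optSE)$; moreover $\BU\subseteq S\setminus E$ for any choice of surviving buckets, so $f(\optSE)\ge f(\optBU)$, and it suffices to show $f(\optBU)\ge\tau$ (up to the $1-1/\ceillogk$ factor) for a well-chosen union $\BU$ of buckets that escape $E$.

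Next I would record the two structural facts driving the bound. First, every full bucket $B_{i,j}$ satisfies $f(B_{i,j})\ge\tau$: its $\min\{2^i,k\}$ elements were admitted with marginal gains at least the partition threshold $\tau/\min\{2^i,k\}$, and these telescope; since also $|B_{i,j}|\le k$, a single full bucket is already a feasible witness of value $\tau$. Second, because $\opt\subseteq V\setminus E$ is disjoint from $E$, any optimal element $o$ that the algorithm rejects ($o\notin S$) must, for every bucket that is non-full at termination (hence non-full when $o$ arrived), satisfy $\margain{o}{B_{i,j}}<\tau/\min\{2^i,k\}$; the bound transfers to the terminal bucket because buckets only grow and $f$ is submodular. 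The smallest of these thresholds, $\tau/k$, is attained in partition $\ceillogk$.

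With these in hand I would split into two cases, and the expression for $\tau$ is precisely the value that equalizes them. In the first case some full bucket survives $E$ intact; the first fact then gives $f(\optBU)\ge\tau$ directly. That a survivor exists is a counting statement: partition $i$ holds at least $4\ceillogk m/2^i$ buckets, so a budget of $m$ deletions cannot touch all full buckets when they are plentiful. In the complementary case I would build $\BU$ by selecting, partition by partition, one bucket untouched by $E$, and bound $f(\opt)\le f(\opt\cup\BU)=f(\BU)+\margain{\opt}{\BU}\le f(\BU)+\sum_{o\in\opt}\margain{o}{\BU}$. Optimal elements already captured by $\BU$ contribute nothing, while each rejected $o$ contributes less than $\tau/k$ by the second fact, for a total below $\tau$. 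The value $f(\BU)$ itself is then estimated by peeling off the top partition, i.e.\ comparing $\optBU$ to $\OPT(k,\BUMO)$ and charging the gain of partition $\ceillogk$; iterating this over all $\ceillogk+1$ partitions, where each surviving bucket is expected to recover at least a $1/3$-fraction of the standard greedy rate, telescopes to the factor $\tfrac{1-e^{-1}}{1-e^{-1/3}}$ that appears in the denominator of $\tau$. Combining the two cases and multiplying by the greedy factor $(1-e^{-1})$ yields $c=0.149\,(1-1/\ceillogk)$.

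The step I expect to be the real obstacle is the complementary case. Three things must be made to work simultaneously under a single deletion budget of $m$: every partition must retain an untouched bucket (delicate for large-index partitions, which have few but large buckets, and is exactly what forces $\bucketmul\ge\lceil 4\ceillogk m/k\rceil$); the witness $\BU$, or rather its size-$k$ optimizer $\optBU$, must remain feasible while still dominating $f(\opt)$ up to the additive $\tau$ term; and the telescoping of the rejected-element gains against the exponentially decreasing thresholds $\tau/2^i$ must be carried out tightly enough to produce the stated constant rather than a lossy one. Getting the interplay of these three quantitative constraints right is the crux of the argument.
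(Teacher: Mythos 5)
Your skeleton (case split on full buckets, threshold bounds on rejected elements, balancing to pick $\tau$) matches the paper's, but two of your concrete mechanisms fail, and they are exactly at the point you flag as the crux. First, your survivor-counting argument is wrong: neither an untouched full bucket (case~1) nor an untouched bucket per partition (case~2) is guaranteed. High-index partitions have \emph{few} large buckets --- partition $\ceillogk$ has only $\bucketmul$ buckets, and $\bucketmul = \lceil 4\ceillogk m/k\rceil$ can be $1$ while $m$ is large --- so $m$ deletions can touch every bucket there. The paper never needs intact buckets: in case~1 it takes the full bucket $B_{\istar,j}$ \emph{minimizing} $|B_{\istar,j}\cap E|$, bounds this intersection by averaging over the $\ge \bucketmul k/2^{\istar+1}$ full buckets, and uses that each surviving element contributed at least $\tau/|B_{\istar,j}|$, giving $f(B_{\istar,j}\setminus E)\ge(1-\tfrac{4m}{\bucketmul k})\tau$ --- a multiplicative loss $\gamma=\tfrac{4m}{\bucketmul k}\le 1/\ceillogk$, which is precisely where the $(1-1/\ceillogk)$ in the final constant comes from (your version, demanding intact buckets, would yield exactly $\tau$ and could not produce this factor). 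In case~2 the paper likewise picks min-intersection not-full buckets and controls the aggregate damage by a telescoping lemma: the marginal value of the removed set $E_i$ is charged against the previous partition's not-full bucket at rate $\tau/2^{i-1}$ per element, so $\sum_i (\tau/2^{i-1})|E_i|\le \gamma\tau$ even though $|E_{\ceillogk}|$ itself may be huge. Nothing in your sketch supplies this charging scheme.

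Second, your reduction ``it suffices to show $f(\optBU)\ge\tau$'' is insufficient in case~2, because the needed witness is not always inside the union of selected buckets. When $f(B_{\ceillogk})$ is small, the paper's Lemma~\ref{lemma:lower_bound} uses as witness the set $Y$ of optimal elements stored in buckets \emph{preceding} $B_{\ceillogk}$: these satisfy $Y\subseteq S\setminus E$ (since $\opt$ is disjoint from $E$) and $|Y|\le k$, but $Y\not\subseteq\BU$, so a lower bound on $f(\optBU)$ alone cannot capture this regime; one must balance \emph{two} bounds in the pivot $f(B_{\ceillogk})$ --- the increasing one from Lemma~\ref{lemma:apply-recursion} and the decreasing one from Lemma~\ref{lemma:lower_bound}. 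Relatedly, your inequality $\margain{o}{\BU}<\tau/k$ for rejected $o$ does not follow: the rejection threshold controls $\margain{o}{B_{\ceillogk}}$ (the \emph{whole} bucket, including its removed elements), while $\BU$ contains only $B_{\ceillogk}\setminus E_{\ceillogk}$, and submodularity transfers the bound in the wrong direction; also optimal elements captured in unselected buckets contribute uncontrolled marginals against $\BU$. Finally, the factor $(1-e^{-1/3})$ does not arise from per-bucket ``$1/3$-fraction recovery'': it is the guarantee $1-e^{-k/3k}$ of $\GREEDY$ with budget $k$ measured against the benchmark $\OPT(3k, S\setminus E)$, applicable because $\big|\BU\big|\le 3k$.
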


\paragraph{Proof sketch.} We first consider the case when there is a partition $\istar$ in $S$ such that at least half of its buckets are full. We show that there is at least one full bucket $B_{\istar, j}$ such that $f\left(B_{\istar, j} \setminus E \right)$ is only a constant factor smaller than $f(\opt)$, as long as the threshold $\tau$ is set close to $f(\opt)$. We make this statement precise in the following lemma:

\begin{restatable}{lemma}{lemmafullbuckets}\label{lemma:full-buckets}
	If there exists a partition in $S$ such that at least half of its buckets are full, then for the set $Z$ produced by \OALG-\GREEDY we have
	\begin{equation}
		f(Z) \geq \left(1 - e^{-1}\right) \left(1 - \frac{4 m}{\bucketmul k}\right) \tau.
 	\end{equation}
\end{restatable}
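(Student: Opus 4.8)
The plan is to reduce the whole statement to the analysis of a single well-preserved bucket, and then invoke the standard greedy guarantee on that bucket alone. First I would fix a partition $\istar$ in which at least half of its $\bucketmul \lceil k/2^{\istar}\rceil$ buckets are full, and set $s := \min\{2^{\istar},k\}$, which is simultaneously the capacity of each bucket in this partition and the reciprocal of its threshold $\tau/s$ (as used in line~\ref{line:if-size-and-gain} of Algorithm~\ref{alg:creating-S}). The crucial local observation is that a full bucket retains most of its value even after removals. Each full bucket $B_{\istar,j}$ was filled greedily, its $s$ elements inserted one at a time with marginal gain at least $\tau/s$, so $f(B_{\istar,j}) \ge \tau$. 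To control the effect of removing $E$, I would set $A := B_{\istar,j}\setminus E$ and list the surviving elements in their original insertion order: the $A$-predecessors of any surviving $e$ form a subset of its $B_{\istar,j}$-predecessors, so by submodularity $\margain{e}{A_{<e}} \ge \margain{e}{(B_{\istar,j})_{<e}} \ge \tau/s$. Telescoping gives $f(B_{\istar,j}\setminus E) \ge |A|\cdot \tau/s = \left(1 - |E\cap B_{\istar,j}|/s\right)\tau$.

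Second comes a counting step to find a full bucket that $E$ barely touches. Since the buckets are pairwise disjoint, the contributions $|E\cap B_{\istar,j}|$ over the $F \ge \tfrac12 \bucketmul\lceil k/2^{\istar}\rceil$ full buckets sum to at most $|E| \le m$, so by averaging some full bucket $B$ satisfies $|E\cap B| \le m/F$. Because $\lceil k/2^{\istar}\rceil\cdot s \ge k$ holds in both regimes ($2^{\istar}\le k$ and $2^{\istar} > k$), we get $Fs \ge \bucketmul k/2$, hence $|E\cap B|/s \le 2m/(\bucketmul k) \le 4m/(\bucketmul k)$. Combined with the per-bucket estimate above, this yields a single full bucket with $f(B\setminus E) \ge \left(1 - \tfrac{4m}{\bucketmul k}\right)\tau$.

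Finally I would cash this in through the query stage. The set $B\setminus E$ is a subset of $S\setminus E$ of size at most $s \le k$, so the optimal size-$k$ subset of $S\setminus E$ has value at least $f(B\setminus E)$, and the standard $(1-e^{-1})$ guarantee for $\GREEDY(k, S\setminus E)$ gives $f(Z) \ge (1-e^{-1}) f(B\setminus E) \ge (1-e^{-1})\left(1 - \tfrac{4m}{\bucketmul k}\right)\tau$, as claimed.

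I expect the main obstacle to be the single-bucket robustness estimate in the first paragraph: one must argue that deleting an arbitrary subset from a greedily filled bucket leaves the \emph{remaining} insertion-order marginals no smaller than the original threshold, which is exactly where submodularity (marginal over a subset dominates marginal over a superset) is essential and where a naive bound would lose too much. The subsequent averaging is routine, but it has to be paired carefully with the partition's cardinality so that the surviving slack stays of order $m/(\bucketmul k)$ rather than degrading with $2^{\istar}$; verifying $\lceil k/2^{\istar}\rceil\cdot s \ge k$ uniformly over all partitions (including the top one, where $s=k$) is the detail that keeps the constant clean.
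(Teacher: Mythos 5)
Your proposal is correct and follows essentially the same route as the paper's proof: pick the full bucket in partition $\istar$ least intersected by $E$ via an averaging argument over the (at least) $\bucketmul k/2^{\istar+1}$ full buckets, lower-bound $f(B_{\istar,j}\setminus E)$ by the number of surviving elements times the insertion threshold, and finish with the $(1-e^{-1})$ guarantee of \GREEDY on $S\setminus E$. Your write-up is in fact slightly more careful than the paper's in two spots — it makes explicit the insertion-order/submodularity argument showing surviving elements retain marginal gain at least $\tau/s$ (which the paper asserts tersely), and your counting via $\lceil k/2^{\istar}\rceil \cdot s \ge k$ gives the sharper constant $2m/(\bucketmul k)$ before relaxing to the stated $4m/(\bucketmul k)$.
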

To prove this lemma, we first observe that from the properties of \GREEDY it follows that
\[f(Z) = f(\GREEDY(k, S\setminus E)) \geq \left(1 - e^{-1}\right) f\left(B_{\istar, j} \setminus E\right). \] 
Now it remains to show that $f\left(B_{\istar, j} \setminus E \right)$ is close to $\tau$. We observe that for any full bucket $B_{\istar, j}$, we have $|B_{\istar, j}| = \min \lbrace 2^i,k \rbrace$, so its objective value $f\left(B_{\istar, j} \right)$ is at least $\tau$ (every element added to this bucket increases its objective value by at least $\tau / \min \lbrace 2^i,k \rbrace$). On average, $|B_{\istar, j} \cap E|$ is relatively small, and hence we can show that there exists some full bucket $B_{\istar, j}$ such that $f\left(B_{\istar, j} \setminus E \right)$ is close to $f\left(B_{\istar, j} \right)$.

Next, we consider the other case, i.e., when for every partition, more than half of its buckets are not full after the execution of \OALG. For every partition $i$, we let $B_i$ denote a bucket that is not fully populated and for which $|B_i \cap E|$ is minimized over all the buckets of that partition. Then, we look at such a bucket in the last partition: $B_{\ceillogk}$. 

We provide two lemmas that depend on $f(B_{\ceillogk})$. If $\tau$ is set to be small compared to $f(\opt)$:
\begin{itemize}
\item Lemma~\ref{lemma:apply-recursion} shows that if $f(B_{\ceillogk})$ is close to $f(\opt)$, then our solution is within a constant factor of $f(\opt)$;
\item Lemma~\ref{lemma:lower_bound} shows that if $f(B_{\ceillogk})$ is small compared to $f(\opt)$, then our solution is again within a constant factor of $f(\opt)$.
\end{itemize}

\begin{restatable}{lemma}{lemmaapplyrecursion}\label{lemma:apply-recursion}
	If there does not exist a partition of $S$ such that at least half of its buckets are full, then for the set $Z$ produced by \OALG-\GREEDY we have
	\[
		f(Z) \ge \left(1 - e^{-1/3}\right) \left(f\left(B_{\ceillogk}\right) - \frac{4 m}{\bucketmul k} \tau\right),
	\]
	where $B_{\ceillogk}$ is a not-fully-populated bucket in the last partition that minimizes $\left|B_\ceillogk \cap E\right|$ and $|E| \le m$.
\end{restatable}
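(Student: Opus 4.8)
The plan is to compete \GREEDY against a single explicit target set of size at most $3k$ and then certify that the value of this target is essentially $f(B_{\ceillogk})$. The natural candidate is the union of all the selected survivor buckets, $T = \BU$, where I write $E_i := B_i \cap E$. First I would bound its size: since bucket $B_i$ holds at most $\min\{2^i,k\}$ elements and $\sum_{i=0}^{\ceillogk}\min\{2^i,k\} = \big(\sum_{i=0}^{\ceillogk-1} 2^i\big) + k < 2k + k = 3k$ (using $2^{\ceillogk-1} < k \le 2^{\ceillogk}$), we get $|T| < 3k$. Because $T \subseteq S \setminus E$ and \GREEDY runs for $k$ steps on $S \setminus E$, the standard partial-coverage analysis of the greedy algorithm gives $f(Z) \ge \big(1 - e^{-k/|T|}\big) f(T) \ge \big(1 - e^{-1/3}\big) f(T)$. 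This is exactly where the otherwise mysterious exponent $1/3$ enters: it is forced by the target having size up to three times the cardinality budget.

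It then remains to show $f(T) \ge f(B_{\ceillogk}) - \tfrac{4m}{\bucketmul k}\tau$. Since $B_{\ceillogk}\setminus E \subseteq T$, adjoining the missing elements $B_{\ceillogk}\cap E$ to $T$ yields a superset of $B_{\ceillogk}$, so by the definition of marginal gain and monotonicity, $f(T) = f(T \cup B_{\ceillogk}) - \margain{B_{\ceillogk}\cap E}{T} \ge f(B_{\ceillogk}) - \margain{B_{\ceillogk}\cap E}{T}$. Thus the whole lemma reduces to the robustness estimate $\margain{B_{\ceillogk}\cap E}{T} \le \tfrac{4m}{\bucketmul k}\tau$. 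Two ingredients feed into this. The first is an averaging step: the last partition consists of exactly $\bucketmul \lceil k/2^{\ceillogk}\rceil = \bucketmul$ disjoint buckets, and $E$ can meet them in at most $m$ elements in total, so the minimizing choice of $B_{\ceillogk}$ guarantees $|B_{\ceillogk}\cap E| \le m/\bucketmul$. The second, which I expect to be the crux, is a per-element bound $\margain{e}{T} \le \tfrac{4\tau}{k}$ for every removed $e \in B_{\ceillogk}\cap E$; combining it with subadditivity of the marginal, $\margain{B_{\ceillogk}\cap E}{T} \le \sum_{e \in B_{\ceillogk}\cap E} \margain{e}{T}$, and the averaging bound yields exactly $\tfrac{|B_{\ceillogk}\cap E|\cdot 4\tau}{k} \le \tfrac{4m}{\bucketmul k}\tau$.

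The main obstacle is precisely this per-element estimate, and it is where both the thresholding rule and the no-half-full hypothesis must be used. The difficulty is that the acceptance rule only \emph{lower}-bounds insertion gains (each element enters $B_{\ceillogk}$ with gain at least $\tau/k$), whereas I need an \emph{upper} bound on the marginal of a removed element over the rich set $T$; worse, deleting elements can only increase marginals, so naive submodular manipulations push in the wrong direction. The idea I would pursue is to exploit that $e$ reached the last partition only because it was rejected by every non-full bucket of the earlier partitions, whose thresholds $\tau/\min\{2^i,k\}$ exceed $\tau/k$ by at most a factor $2$ (again since $2^{\ceillogk-1} < k$). The no-half-full assumption guarantees that such a rejecting bucket — in particular the selected penultimate bucket $B_{\ceillogk - 1}$ — is always available and non-full throughout the stream, and crucially its surviving part $B_{\ceillogk-1}\setminus E$ sits inside $T$, so monotonicity of $f$ gives $\margain{e}{T} \le \margain{e}{B_{\ceillogk-1}\setminus E}$. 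Transferring the rejection inequality (which controls the marginal of $e$ against the \emph{full} content of $B_{\ceillogk-1}$ at the time of rejection) over to this survivor set, while absorbing the rounding in the bucket capacities and, most delicately, the few removals that sit inside the earlier buckets, is the technical heart of the argument and the step I would expect to consume most of the work.
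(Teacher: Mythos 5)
Your scaffolding coincides exactly with the paper's: the target $T=\BU$ (with each $B_i$ a not-fully-populated bucket of partition $i$), the size bound $|T| \le 3k$, the partial-coverage inequality $f(Z) \ge \left(1-e^{-k/|T|}\right) f(T) \ge \left(1-e^{-1/3}\right) f(T)$, and the reduction via monotonicity (the paper's Lemma~\ref{lemma:A-B-R} with $A=\BUMO$, $B=B_{\ceillogk}$, $R=E_{\ceillogk}$) to the single estimate $\margain{E_{\ceillogk}}{T} \le \tfrac{4m}{\bucketmul k}\tau$. But the step you defer is the entire content of the lemma, and the per-element route you sketch for it cannot work. The acceptance rule upper-bounds $\margain{e}{B_{\ceillogk-1}}$ only against the \emph{intact} contents of a non-full bucket (at arrival time, hence, by submodularity, against its final intact contents), whereas you need a bound against the survivor set $B_{\ceillogk-1}\setminus E_{\ceillogk-1} \subseteq T$; as you yourself note, removals only inflate marginals, and in fact no per-element bound of order $\tau/k$ survives this transfer. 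Concretely, a removed $e \in E_{\ceillogk}$ can have $\margain{e}{T}$ as large as essentially $f(e)$, which the structure caps only at $\tau$ (via rejection from an empty partition-$0$ bucket), not at $4\tau/k$: an adversary can devote one removal to each non-full bucket of partition $\ceillogk - 1$ -- there are only about $2\bucketmul$, i.e.\ $O\left((m\log k)/k\right)$, of them when $\bucketmul = \left\lceil 4\ceillogk m/k \right\rceil$, far below the budget $m$ -- destroying precisely the elements that made $e$'s rejection marginals small, so even the $|E_{\ceillogk-1}|$-minimizing choice of the penultimate bucket does not rescue your bound $\margain{e}{T} \le \tfrac{4\tau}{k}$.

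The paper's proof avoids per-element estimates altogether and is collective and recursive. Lemma~\ref{lemma:E-i-given-B-i-1} bounds $\margain{E_i}{B_{i-1}} < \tfrac{\tau}{2^{i-1}}|E_i|$ against the \emph{intact} bucket, and Lemma~\ref{lemma:recursive} telescopes away exactly the discrepancy you flagged: adding and subtracting, and reinserting $E_{i-1}$ via monotonicity, gives $\margain{E_i}{\bigcup_{j=0}^{i-1}(B_j\setminus E_j)} \le \margain{E_i}{B_{i-1}} + \margain{E_{i-1}}{\bigcup_{j=0}^{i-2}(B_j\setminus E_j)}$, so the damage from the missing elements of $B_{i-1}$ is charged \emph{once}, to the set $E_{i-1}$, rather than once per element of $E_i$. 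Unrolling yields $\margain{E_{\ceillogk}}{\BUMO} \le \sum_{j=1}^{\ceillogk}\tfrac{\tau}{2^{j-1}}|E_j|$, and only then does averaging enter -- in \emph{every} partition, not just the last: choosing each $B_j$ as a non-full bucket minimizing $|E_j|$ gives $|E_j| \le 2^{j+1}|\tE_j|/(\bucketmul k)$, whence the sum collapses to $\tfrac{4|E|}{\bucketmul k}\tau \le \tfrac{4m}{\bucketmul k}\tau$. Note that this budget must absorb removals from \emph{all} partitions, while your accounting spends it entirely on $|B_{\ceillogk}\cap E|$. A smaller quibble in the same step: since $B_{\ceillogk}$ minimizes over only the more than $\bucketmul/2$ non-full buckets of the last partition, the justified count is $|E_{\ceillogk}| \le 2m/\bucketmul$, not $m/\bucketmul$.
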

Using standard properties of submodular functions and the $\GREEDY$ algorithm we can show that
\begin{equation}
f(Z) = f(\GREEDY(k, S\setminus E))
	\geq \left(1 - e^{-1/3}\right) \left(f\left(B_{\ceillogk}\right) - \frac{4 m}{\bucketmul k} \tau\right) \nonumber.
\end{equation}
The complete proof of this result can be found in Lemma~\ref{lemma:recursive}, in the supplementary material. 

\begin{restatable}{lemma}{lemmalowerbound}\label{lemma:lower_bound}
	If there does not exist a partition of $S$ such that at least half of its buckets are full, then for the set $Z$ produced by \OALG-\GREEDY,
	\[
		f(Z) \geq (1 - e^{-1}) \big(f(OPT(k, V \setminus E)) - f(B_\ceillogk) - \tau \big),
	\] where $B_\ceillogk$ is any not-fully-populated bucket in the last partition.
\end{restatable}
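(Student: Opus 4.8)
The plan is to combine the standard \GREEDY guarantee with a carefully chosen ``test set'' living inside $S \setminus E$, and to reduce the entire claim to one inequality that the algorithm's rejection rule will supply. Write $O := \OPT(k, V \setminus E)$ for the optimum we compete against, and note that $O \cap E = \emptyset$, so $O \cap S \subseteq S \setminus E$ and $|O \cap S| \le k$. Since $Z = \GREEDY(k, S\setminus E)$, the classical analysis of \GREEDY gives $f(Z) \ge (1-e^{-1}) f(T)$ for every feasible $T \subseteq S\setminus E$ with $|T|\le k$; taking $T = O\cap S$ yields $f(Z) \ge (1-e^{-1}) f(O\cap S)$. It therefore suffices to establish the single inequality $f(O) \le f(O\cap S) + f(B_\ceillogk) + \tau$, after which everything assembles mechanically.

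The crux, and the step I expect to be the main obstacle, is controlling the optimum elements the stream discarded, namely $O\setminus S$. The key structural fact I would prove is that every such element has small marginal value with respect to the final last-partition bucket: for each $o \in O\setminus S$, $\margain{o}{B_\ceillogk} < \tau/k$. To see this, recall that the last partition has index $\ceillogk$ and bucket capacity $\min\{2^{\ceillogk},k\}=k$, so its threshold is $\tau/k$. Since $o$ was never stored, it was examined against $B_\ceillogk$ and rejected at line~\ref{line:if-size-and-gain}. Because buckets only grow over the stream and $B_\ceillogk$ is not full at the end, it was not full when $o$ arrived; hence the rejection was forced by the marginal-gain test, i.e. $\margain{o}{B'} < \tau/k$, where $B'$ is the bucket's content at that moment. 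As $B' \subseteq B_\ceillogk$, submodularity (diminishing returns) gives $\margain{o}{B_\ceillogk} \le \margain{o}{B'} < \tau/k$. The care required here is precisely to rule out fullness as the cause of rejection (using the monotone growth of buckets) and to transport the threshold from the processing-time state to the final state via submodularity.

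With this fact in hand the remaining estimate is routine. By monotonicity $f(O) \le f(O \cup B_\ceillogk)$, and writing $O\cup B_\ceillogk = \big((O\cap S)\cup B_\ceillogk\big) \cup (O\setminus S)$, subadditivity of marginals together with $\margain{o}{(O\cap S)\cup B_\ceillogk} \le \margain{o}{B_\ceillogk} < \tau/k$ (again diminishing returns, since $B_\ceillogk \subseteq (O\cap S)\cup B_\ceillogk$) gives
\[
	f(O\cup B_\ceillogk) \le f\big((O\cap S)\cup B_\ceillogk\big) + \sum_{o\in O\setminus S}\margain{o}{(O\cap S)\cup B_\ceillogk} < f\big((O\cap S)\cup B_\ceillogk\big) + |O\setminus S|\,\tfrac{\tau}{k}.
\]
Since $|O\setminus S| \le |O| \le k$, the last term is at most $\tau$. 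Finally, normalized monotone submodular functions are subadditive, so $f\big((O\cap S)\cup B_\ceillogk\big) \le f(O\cap S) + f(B_\ceillogk)$. Chaining these inequalities yields the reduction target $f(O) \le f(O\cap S) + f(B_\ceillogk) + \tau$, and combining it with $f(Z) \ge (1-e^{-1}) f(O\cap S)$ proves $f(Z) \ge (1-e^{-1})\big(f(\opt) - f(B_\ceillogk) - \tau\big)$. I would note that the case hypothesis (no partition with half its buckets full) is used only to guarantee that a not-fully-populated bucket $B_\ceillogk$ in the last partition exists; the argument itself needs nothing beyond the fact that $B_\ceillogk$ is not full.
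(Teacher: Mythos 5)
Your proof is correct, and its skeleton matches the paper's: run \GREEDY's $(1-e^{-1})$ guarantee against a test set of optimum elements that survive in $S \setminus E$, and charge the discarded optimum elements to $B_\ceillogk$ via the last partition's rejection rule (threshold $\tau/k$ and a bucket that is never full), with $|\opt| \le k$ turning the per-element bound into the additive $\tau$. The genuine difference is the decomposition. The paper splits $\opt$ into $Y$, the optimum elements stored in buckets \emph{preceding} $B_\ceillogk$, and $X = \opt \setminus Y$; you split it into $O \cap S$ and $O \setminus S$. Your test set is a superset of the paper's $Y$ and is still feasible (since $O \cap E = \emptyset$ and $|O \cap S| \le k$, which is all the \GREEDY bound needs), while your discarded set is smaller: it contains only elements rejected by \emph{every} bucket, so the key fact $\margain{o}{B_\ceillogk} < \tau/k$ follows immediately from monotone bucket growth plus submodularity. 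In the paper's version, $X$ also contains optimum elements that ended up in $B_\ceillogk$ itself or in later buckets, so the same inequality needs the additional observation that such elements were tested against $B_\ceillogk$ before being stored elsewhere (or lie inside $B_\ceillogk$, giving marginal gain zero) --- a case analysis your choice sidesteps. The tail of your argument (monotonicity, then summing marginals, then subadditivity to peel off $f(B_\ceillogk)$) is a reordering of the paper's chain of inequalities bounding $f(Y)$, and both routes give the identical final bound; you also correctly locate the only use of the ``no partition is half full'' hypothesis, namely the existence of a non-full bucket in the last partition.
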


To prove this lemma, we look at two sets $X$ and $Y$, where $Y$ contains all the elements from $\opt$ that are placed in the buckets that precede bucket $B_\ceillogk $ in $S$, and set $X := \opt \setminus Y$.
By monotonicity and submodularity of $f$, we bound $f(Y)$ by:
\[f(Y) \geq f(\opt) - f(X)  \geq f(\opt) - f\left(B_\ceillogk\right) - \sum_{e \in X} \margain{e}{B_\ceillogk}.\] 
To bound the sum on the right hand side we use that for every $e \in X$ we have $\margain{e}{B_\ceillogk} < \tfrac{\tau}{k}$, which holds due to the fact that $B_\ceillogk$ is a bucket in the last partition and is not fully populated. 

We conclude the proof by showing that
$f(Z) = f(\GREEDY(k, S\setminus E)) \geq \left(1 - e^{-1}\right) f(Y) \label{eq:lemma3_7}$.

Equipped with the above results, we proceed to prove our main result.

\begin{proofof}{Theorem~\ref{thm:main-theorem}}
First, we prove the bound on the size of $S$:
\begin{equation}
 	|S|  = \sum_{i=0}^{\ceillogk} \bucketmul \lceil k/2^i \rceil \min \{2^i,k\} \leq \sum_{i=0}^{\ceillogk} \bucketmul (k/2^i + 1) 2^i \leq (\log k + 5) wk.  	
\end{equation}
By setting $\bucketmul \geq \left \lceil \tfrac{4 \ceillogk m}{k} \right \rceil$ we obtain $S = O((k + m \log k) \log k)$.

Next, we show the approximation guarantee. We first define $\gamma := \tfrac{4 m}{\bucketmul k}$, $\alpha_1 := \left(1 - e^{-1/3}\right)$, and $\alpha_2 := \left(1 - e^{-1}\right)$. Lemma~\ref{lemma:apply-recursion} and~\ref{lemma:lower_bound} provide two bounds on $f(Z)$, one increasing and one decreasing in $f(B_{\ceillogk})$. By balancing out the two bounds, we derive
\begin{equation}
	\label{eq:second_and_third_lb_combined}
	f(Z) \geq \left( \frac{\alpha_1 \alpha_2}{\alpha_1 + \alpha_2} \right) (f(\opt) -  (1 + \gamma)\tau),
\end{equation}
with equality for $f(B_{\ceillogk}) = \frac{\alpha_2 f(\opt) - (\alpha_2 - \gamma \alpha_1) \tau} {\alpha_2 + \alpha_1}$.

Next, as $\gamma \ge 0$, we can observe that Eq.~\eqref{eq:second_and_third_lb_combined} is decreasing, while the bound on $f(Z)$ given by Lemma~\ref{lemma:full-buckets} is increasing in $\tau$ for $\gamma < 1$.
Hence, by balancing out the two inequalities, we obtain our final bound 
\begin{align}
	f(Z) & \geq \frac{1}{\frac{2}{\alpha_2 (1 - \gamma)} + \frac{1}{\alpha_1}} f(\opt). \label{eq:fZ-final-bound}
\end{align}
For $\bucketmul \ge  \left \lceil \tfrac{4 \ceillogk m}{k} \right \rceil$ we have $\gamma \le 1 / \ceillogk$, and hence, by substituting $\alpha_1$ and $\alpha_2$ in Eq.~\eqref{eq:fZ-final-bound}, we prove our main result:
\begin{eqnarray*}
	f(Z) & \ge & \frac{\left(1 - e^{-1/3}\right) \left(1 - e^{-1}\right) \left(1 - \frac{1}{\ceillogk}\right)}{2 \left(1 - e^{-1/3}\right) + \left(1 - e^{-1}\right)} f(\opt) \\
		& \ge & 0.149 \left(1 - \frac{1}{\ceillogk}\right) f(\opt).
\end{eqnarray*}

\end{proofof}

 \subsection{Algorithm without access to $f(\opt)$}
\label{section:OPT-not-needed}
Algorithm $\OALG$ requires in its input a parameter $\tau$ which is a function of an unknown value $f(\opt)$. To deal with this shortcoming, we show how to extend the idea of~\cite{badanidiyuru2014streaming} of maintaining multiple parallel instances of our algorithm in order to approximate $f(\opt)$. For a given constant $\epsilon > 0$, this approach increases the space by a factor of $\logepsk$ and provides a $(1 + \epsilon)$-approximation compared to the value obtained in Theorem~\ref{thm:main-theorem}. More precisely, we prove the following theorem.
\begin{theorem}
\label{thm:parallel-instances}
	For any given constant $\epsilon > 0$ there exists a parallel variant of $\OALG$ that makes one pass over the stream and outputs a collection of sets $\cS$ of total size $O\left((k + m \log{k}) \log{k} \logepsk\right)$ with the following property: There exists a set $S \in \cS$ such that applying \OALG-\GREEDY on $S$ yields a set $Z \subseteq S \setminus E$ of size at most $k$ with
	\[
		f(Z) \ge \frac{0.149}{1 + \epsilon} \left(1 - \frac{1}{\ceillogk}\right) f(\opt).
	\]
\end{theorem}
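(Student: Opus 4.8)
The plan is to follow the multiple-guesses paradigm of \SIEVE~\cite{badanidiyuru2014streaming}: run $O(\logepsk)$ copies of \OALG in parallel, one for each candidate value of $\tau$ drawn from a geometric grid, and at query time (once $E$ is revealed) return the best of the resulting summaries. The entire argument reduces to two sub-claims: first, a single copy whose threshold $\tau$ is only a $(1+\epsilon)$-factor away from the ideal value $\tau^{\star} := \beta f(\opt)$ --- where $\beta$ is the constant prescribed in Theorem~\ref{thm:main-theorem} --- still delivers essentially the guarantee of Theorem~\ref{thm:main-theorem}, losing only a $(1+\epsilon)$ factor; and second, such a copy is always present in the grid, which can be confined to $O(\logepsk)$ values.

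For the first claim, the key observation is that Lemmas~\ref{lemma:full-buckets}, \ref{lemma:apply-recursion} and~\ref{lemma:lower_bound} are purely structural: they hold for \emph{any} value of $\tau$ fed to \OALG, and only the final balancing step of Theorem~\ref{thm:main-theorem} fixes $\tau = \tau^{\star}$. Write $A(\tau)$ for the lower bound of Lemma~\ref{lemma:full-buckets}, which is linear and increasing in $\tau$, and $B(\tau)$ for the combined bound~\eqref{eq:second_and_third_lb_combined}, which is decreasing in $\tau$; by construction $A(\tau^{\star}) = B(\tau^{\star}) \ge c\, f(\opt)$ with $c = 0.149(1 - 1/\ceillogk)$. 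I would then take any grid point $\tau$ with $\tau^{\star}/(1+\epsilon) \le \tau \le \tau^{\star}$ and split on the two cases of the analysis: if some partition has at least half of its buckets full, Lemma~\ref{lemma:full-buckets} gives $f(Z) \ge A(\tau) \ge A(\tau^{\star})/(1+\epsilon) \ge c\,f(\opt)/(1+\epsilon)$; otherwise~\eqref{eq:second_and_third_lb_combined} gives $f(Z) \ge B(\tau) \ge B(\tau^{\star}) \ge c\,f(\opt)$, using $\tau \le \tau^{\star}$ and monotonicity of $B$. Either way $f(Z) \ge \tfrac{c}{1+\epsilon} f(\opt)$, which is the claimed bound.

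For the second claim, I would use the standard submodular sandwich $\max_{e} f(\{e\}) \le f(\opt) \le k \max_{e} f(\{e\})$ to confine $\tau^{\star}$ to an interval of multiplicative width $k$, so that a geometric grid of ratio $(1+\epsilon)$ contains a point in $[\tau^{\star}/(1+\epsilon), \tau^{\star}]$ using only $O(\logepsk)$ values. Since the maximum singleton is not known before the pass ends, I would instantiate and retire thresholds lazily as in \SIEVE, so that at most $O(\logepsk)$ instances survive; each instance stores an \OALG summary of size $O((k + m\log k)\log k)$ by Theorem~\ref{thm:main-theorem}, giving the stated total size. Correctness of the pruning rests on the fact that an element whose marginal is below a threshold never enters the corresponding bucket, so retiring obsolete thresholds loses no element that a surviving instance would have kept.

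The main obstacle I anticipate is the interaction between the unknown benchmark and the query-time removal: because $f(\opt) = f(\OPT(k, V\setminus E))$ depends on $E$, which is revealed only after the stream, a threshold that looks ``too small'' against $f(\OPT(k,V))$ may become the ideal $\tau^{\star}$ once $E$ deletes the top elements, so the \SIEVE discarding rule --- justified for a single fixed objective --- cannot be transplanted verbatim. The delicate part is therefore to choose the retained window carefully (anchoring it to the maximum singleton, a quantity whose degradation under the removal of $m$ elements must be controlled up to the factors already absorbed by $\bucketmul$ and $\gamma$) so that for \emph{every} admissible $E$ some surviving instance has $\tau \in [\tau^{\star}/(1+\epsilon), \tau^{\star}]$, while still capping the number of instances at $O(\logepsk)$. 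Verifying this simultaneously over all $E$, rather than for one pre-committed objective, is where the bulk of the care goes.
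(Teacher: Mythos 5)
Your first step is sound and in fact more explicit than the paper on this point: the three lemmas are indeed $\tau$-agnostic, the bound of Lemma~\ref{lemma:full-buckets} is linear and increasing in $\tau$ while the combined bound~\eqref{eq:second_and_third_lb_combined} is decreasing, so any grid point $\tau \in [\tau^{\star}/(1+\epsilon), \tau^{\star}]$ loses at most a $(1+\epsilon)$ factor. The gap is in your second step. You correctly name the obstacle --- the benchmark $f(\opt)$ depends on $E$, so a grid anchored at the maximum singleton of $V$ may be wrong once the top elements are deleted --- but you leave it unresolved, and the resolution you gesture at points in the wrong direction. The ``degradation of the maximum singleton under removal of $m$ elements'' cannot be ``controlled up to the factors already absorbed by $\bucketmul$ and $\gamma$'': adversarial removal of the top $m$ elements can shrink the maximum singleton by an arbitrary factor, and no memory parameter compensates for that. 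The paper's key idea is not to control this degradation but to hedge over it: since $|E| \le m$, the post-removal maximum $\eta = \max_{e \in V \setminus E} f(\lbrace e \rbrace)$ is always attained by one of the $m+1$ largest singletons of the stream. The algorithm therefore maintains the set $L$ of the top $m+1$ elements online, and for \emph{every} $e \in L$ keeps a geometric grid of guesses in the window $[f(e), 2k f(e)]$, instantiating new copies when an element enters $L$ and retiring copies whose anchor leaves $L$. This guarantees, simultaneously for all admissible $E$, that some surviving instance carries a guess within $(1+\epsilon)$ of the correct $f(\opt)$.

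Moreover, your plan to ``cap the number of instances at $O(\logepsk)$'' cannot work and is not what the paper does: the live guesses must span from $f(\emin)$ up to $2k \max_{e \in L} f(\lbrace e \rbrace)$, a range of unbounded multiplicative width, so the number of live instances may greatly exceed $O(\logepsk)$. The stated bound is on the \emph{total size} of the collection $\cS$, not on the number of sets in it, and it is obtained by a per-element charging argument: an arriving element $e$ is fed only to instances whose guess lies in $[f(e), 2k f(e)]$ (guesses below $f(e)$ are useless for $e$, since either $e \in E$ and storing it is irrelevant, or $e \notin E$ and then $f(\opt) \ge f(e)$), and by the thresholding structure $e$ is never stored by instances with guesses above $2 k f(e)$. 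Hence every stored element belongs to at most $O(\logepsk)$ summaries; the elements of $L$ contribute $O(m \logepsk)$ memory, and the remaining elements contribute $O((k + m\log k)\log k \, \logepsk)$ in total. Without the top-$(m+1)$ anchoring and this accounting, your argument does not go through.
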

The proof of this theorem, along with a description of the corresponding algorithm, is provided in Appendix~\ref{app:parallel-instances}.

\section{Experiments}
\label{experiments_section}
In this section, we numerically validate the claims outlined in the previous section. Namely, we test the robustness and compare the performance of our algorithm against the \SIEVE algorithm that knows in advance which elements will be removed. We demonstrate improved or matching performance in two different data summarization applications: (i) the dominating set problem, and (ii) personalized movie recommendation. We illustrate how a single robust summary can be used to regenerate recommendations corresponding to multiple different removals.   

\subsection{Dominating Set}
\label{dom-set-problem}
In the dominating set problem, given a graph $G=(V,M)$, where $V$ represents the set of nodes and $M$ stands for edges, the objective function is given by
$f(Z) = |\mathcal{N}(Z) \cup Z|$, where $\mathcal{N}(Z)$ denotes the neighborhood of $Z$ (all nodes adjacent to any node of $Z$). This objective function is monotone and submodular. 

We consider two datasets: (i) ego-Twitter~\cite{mcauley2012}, consisting of 973 social
circles from Twitter, which form a directed graph with $81306$ nodes and $1768149$ edges; (ii) Amazon product co-purchasing network~\cite{yang2015defining}: a directed graph with $317914$ nodes and $1745870$ edges.

Given the dominating set objective function, we run \OALG to obtain the robust summary $S$. Then we compare the performance of \OALG-\GREEDY, which runs on $S$, against the performance of \SIEVE, which we allow to know in advance which elements will be removed.
We also compare against a method that chooses the same number of elements as \OALG, but does so uniformly at random from the set of all elements that will not be removed ($V \setminus E$); we refer to it as \RANDOM.
Finally, we also demonstrate the peformance of \OALGS, a variant of our algorithm that uses the same robust summary $S$, but instead of running \GREEDY in the second stage, it runs \SIEVE on $S \setminus E$.

Figures~\ref{dominatingplots}(a,c) show the objective value after the random removal of $k$ elements from the set $S$, for different values of $k$.
Note that $E$ is sampled as a subset of the summary of \emph{our} algorithm,
which hurts the performance of our algorithm more than the baselines.
The reported numbers are averaged over $100$ iterations.
\OALGG, \OALGS and \SIEVE
perform comparably (\OALGG slightly outperforms the other two), while \RANDOM is significantly worse.

In Figures~\ref{dominatingplots}(b,d) we plot the objective value for different values of $k$ after the removal of $2k$ elements from the set $S$, chosen greedily (i.e., by iteratively removing the element that reduces
the objective value the most). Again, \OALGG, \OALGS and \SIEVE perform comparably, but this time \SIEVE slightly outperforms the other two for some values of $k$.
We observe that even when we remove more than $k$ elements from $S$, the performance of our algorithm is still comparable to the performance of \SIEVE (which knows in advance which elements will be removed).
We provide additional results in the supplementary material.

\begin{figure}[t!]
\minipage{0.34\textwidth}
  \includegraphics[width=\linewidth]{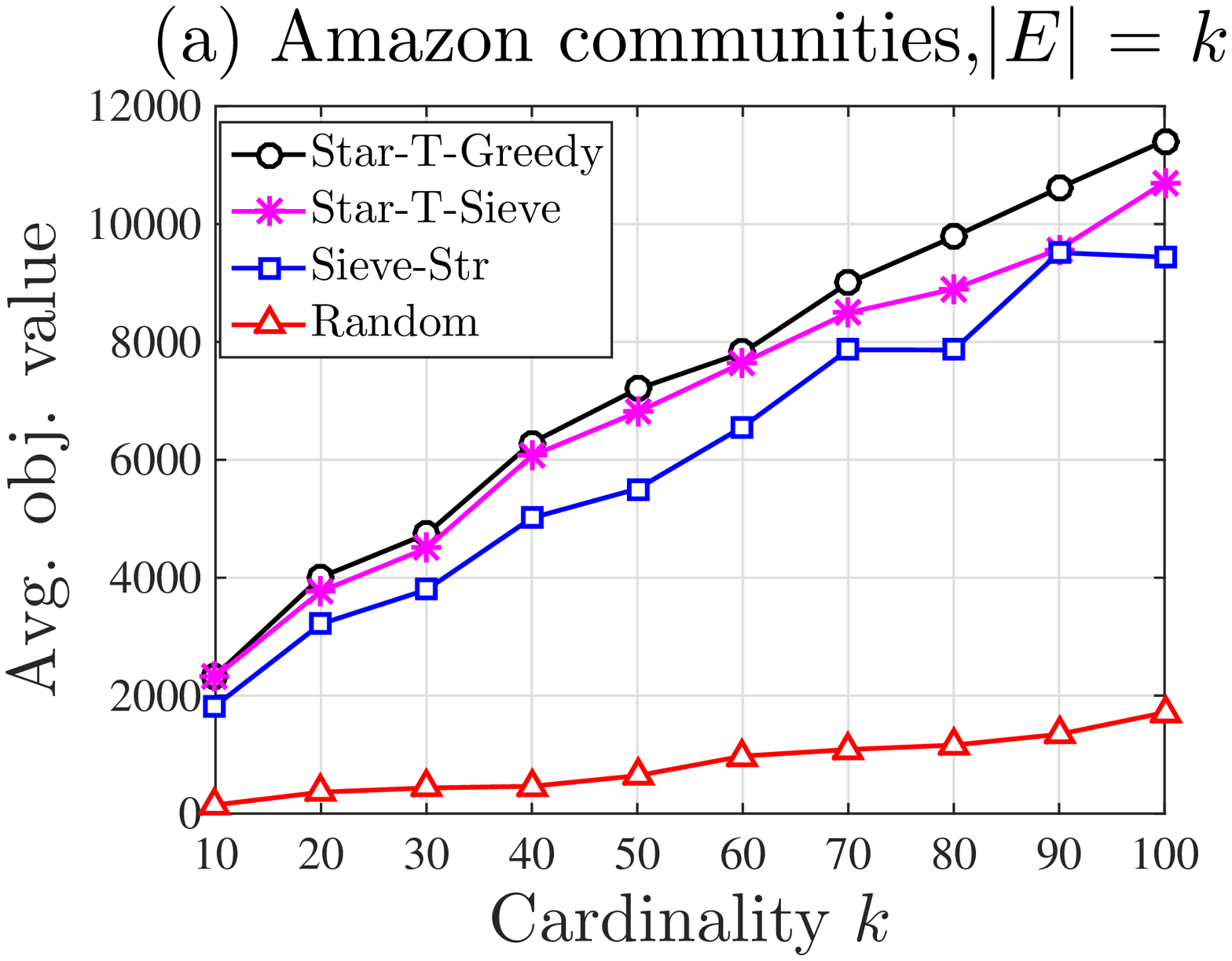}
\endminipage
\minipage{0.34\textwidth}
  \includegraphics[width=\linewidth]{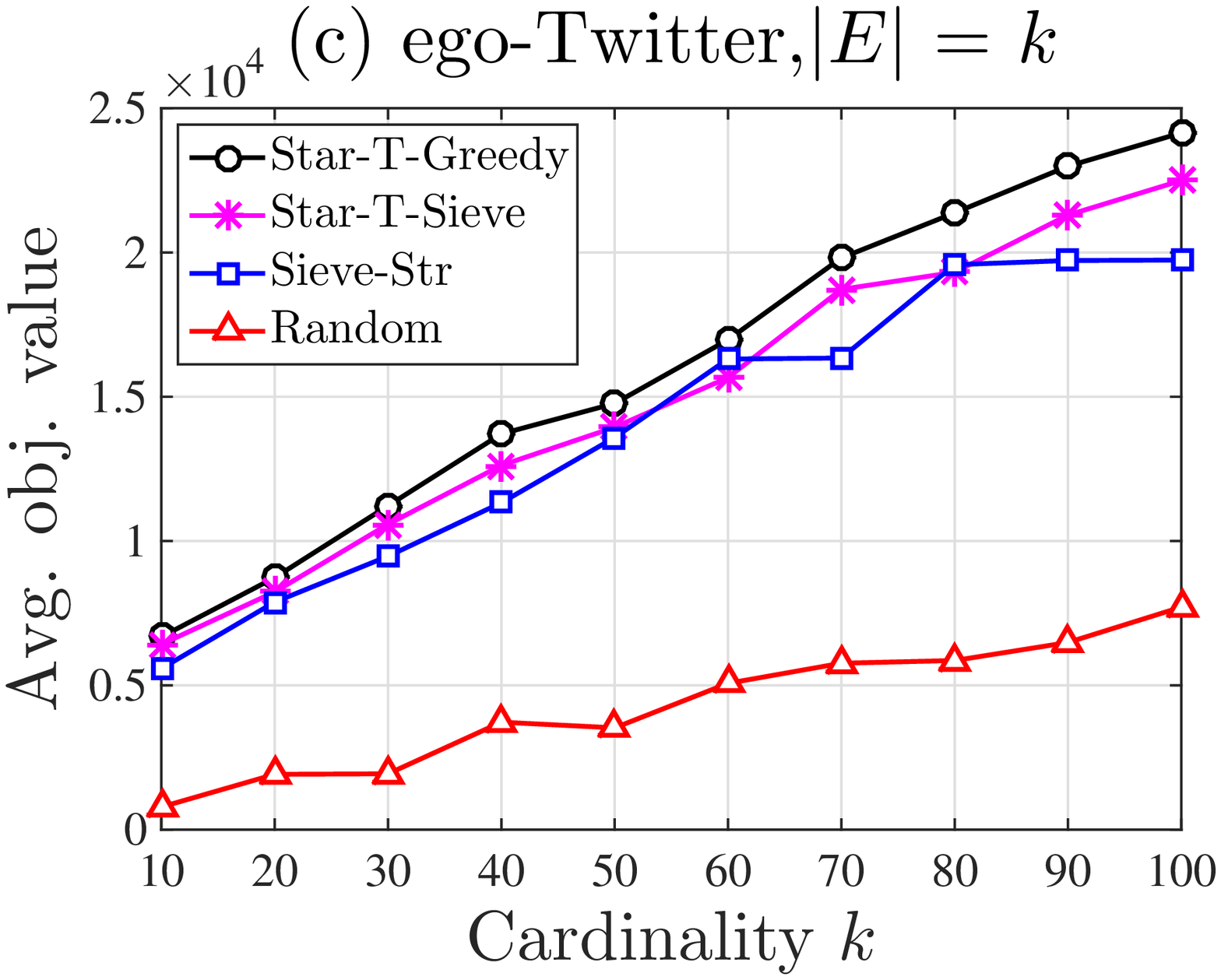}
\endminipage
\minipage{0.34\textwidth}%
  \includegraphics[width=\linewidth]{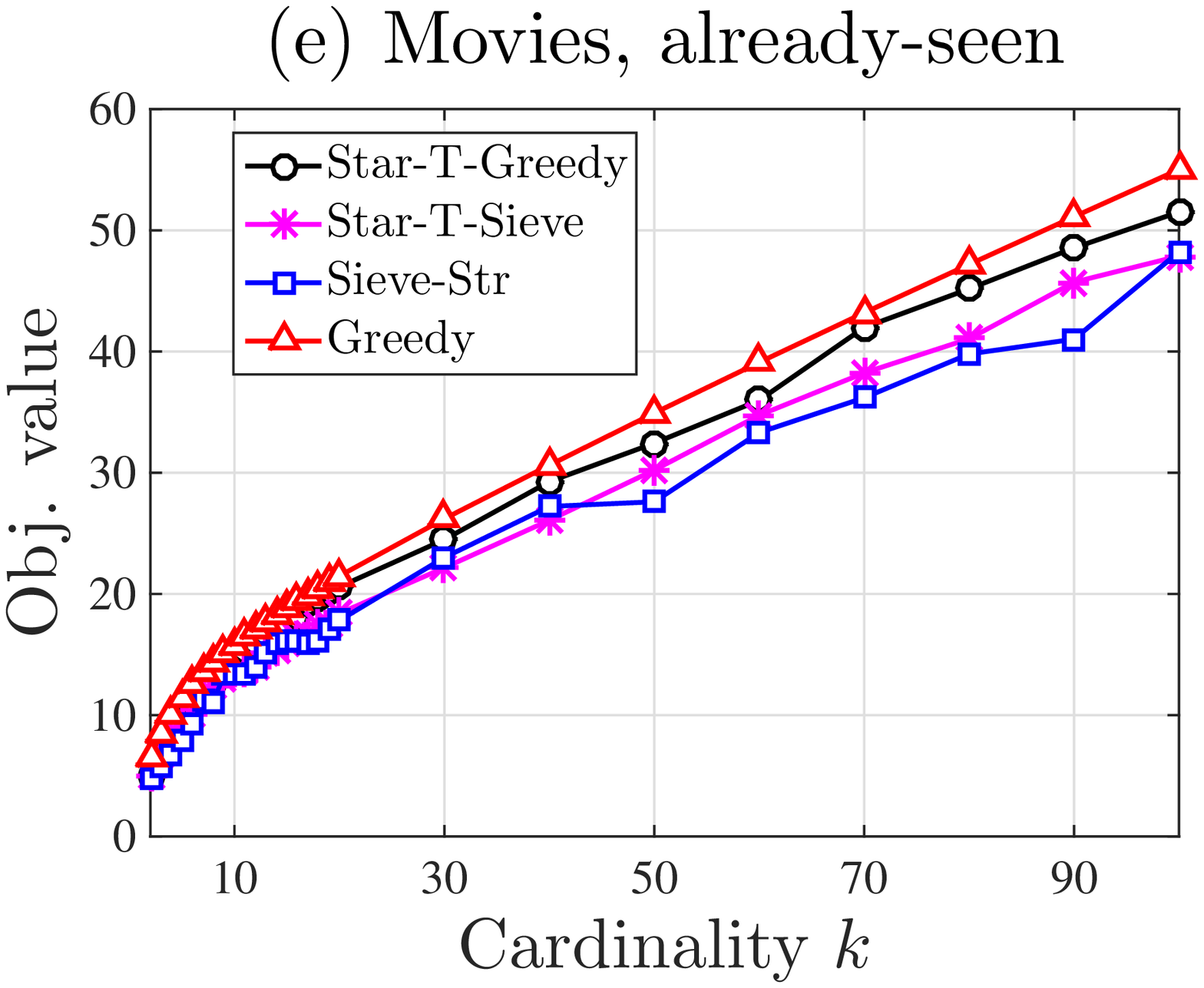}
\endminipage
\vspace{2mm}
\minipage{0.34\textwidth}%
  \includegraphics[width=\linewidth]{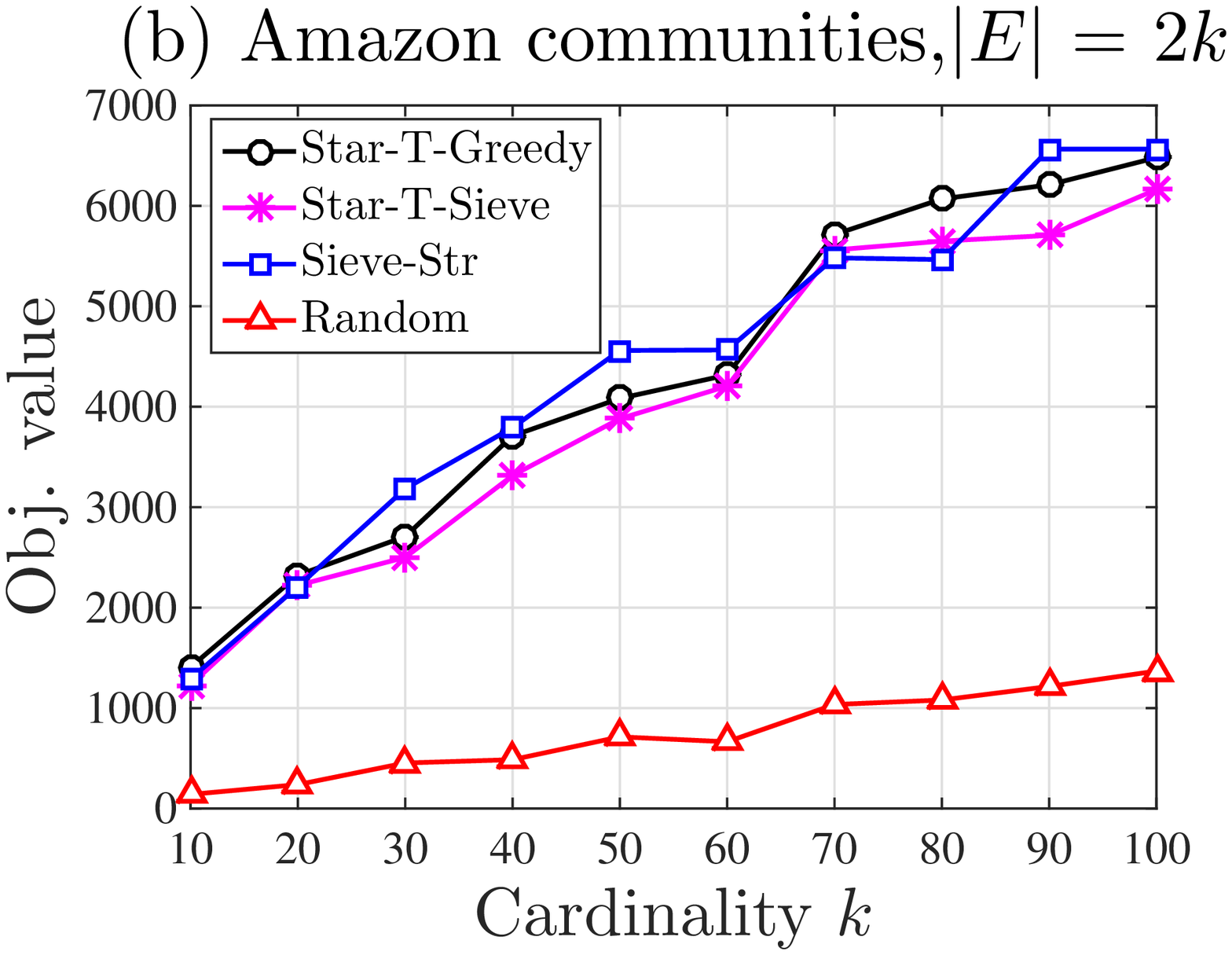}
\endminipage
\minipage{0.34\textwidth}%
  \includegraphics[width=\linewidth]{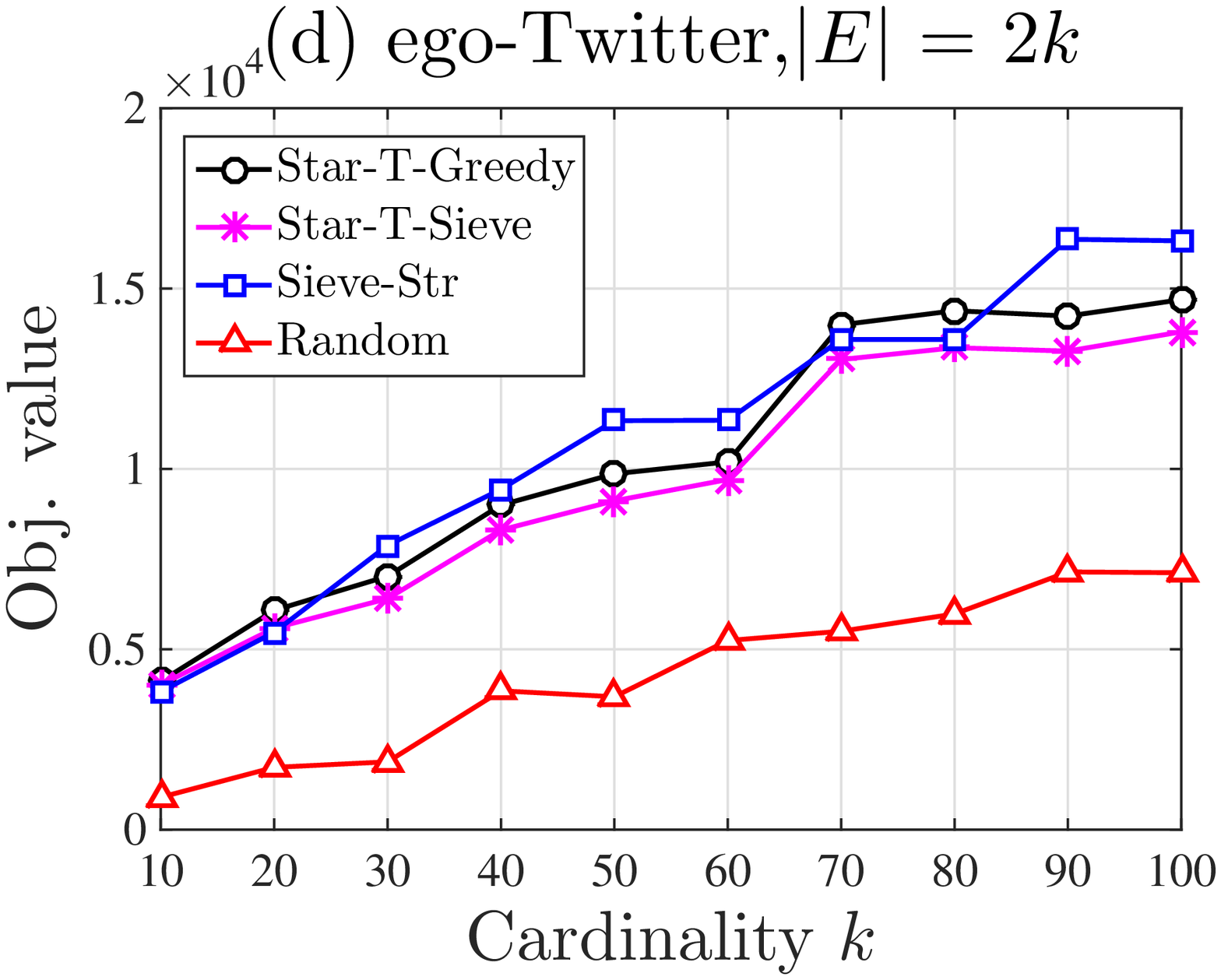}
\endminipage
\minipage{0.34\textwidth}%
  \includegraphics[width=\linewidth]{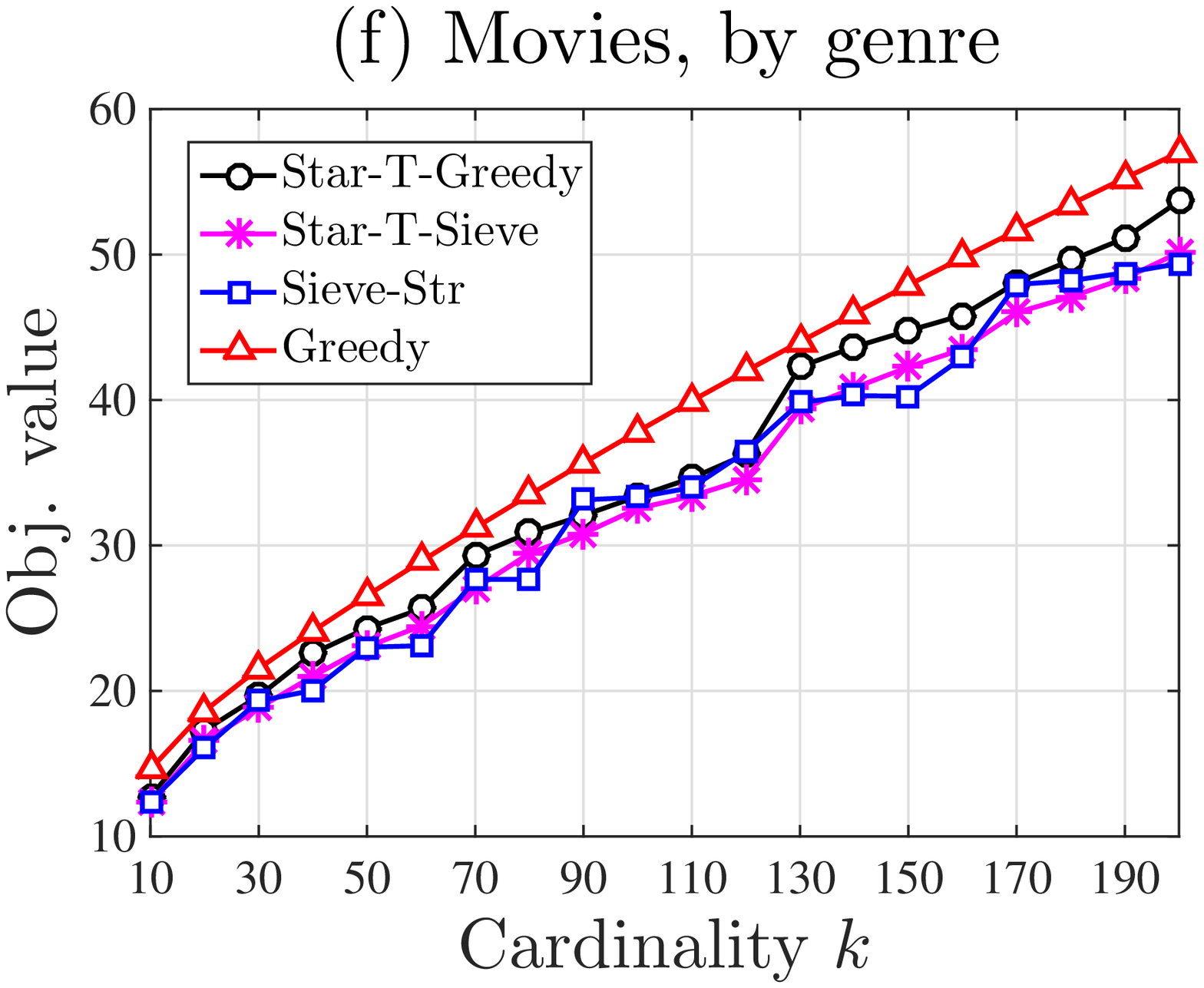}
\endminipage
\caption{Numerical comparisons of the algorithms \OALG-\GREEDY, \OALGS and \SIEVE.} 
\label{dominatingplots}
\end{figure}

\subsection{Interactive Personalized Movie Recommendation}
\label{section:movie-exp}
The next application we consider is personalized movie recommendation. 
We use the MovieLens 1M database \cite{harper2016movielens}, which contains $1000209$ ratings for $3900$ movies by $6040$ users. Based on these ratings, we obtain feature vectors for each movie and each user by 
using standard low-rank matrix completion techniques \cite{troyanskaya2001missing}; we choose the number of features to be $30$.

For a user $u$, we use the following monotone submodular function to recommend a set of movies $Z$:
\[ f_u(Z) = 
(1 - \alpha) \cdot \sum_{z \in Z} \ab{v_u, v_z}
+
\alpha \cdot \sum_{m \in M} \max_{z \in Z} \ab{v_m, v_z}
.
\]
The first term aggregates the predicted scores of the chosen movies $z \in Z$ for the user $u$ (here $v_u$ and $v_z$ are non-normalized feature vectors of user $u$ and movie $z$, respectively).
The second term corresponds to a facility-location objective that measures how well the set $Z$ covers the set of all movies $M$~\cite{lindgren2016leveraging}.
Finally, $\alpha$ is a user-dependent parameter that specifies the importance of global movie coverage versus high scores of individual movies.

Here,
the robust setting arises naturally since we do not have complete information about the user:
when shown a collection of top movies, it will likely turn out that they have watched (but not rated) many of them,
rendering these recommendations moot.
In such an interactive setting, the user may also require (or exclude) movies of a specific genre, or similar to some favorite movie.

We compare the performance of our algorithms
\OALGG and \OALGS
in such scenarios
against two baselines:
\GREEDY
and \SIEVE
(both being run on the set $V \setminus E$, i.e., knowing the removed elements in advance). 
Note that in this case we are able to afford running \GREEDY,
which may be infeasible when working with larger datasets.
Below we discuss two concrete practical scenarios featured in our experiments.

\paragraph{Movies by genre.}
After we have built our summary $S$, the user decides to watch a drama today; we retrieve only movies of this genre from $S$.
This corresponds to removing $59\%$ of the universe $V$.
In Figure~\ref{dominatingplots}(f) we report the quality of our output compared to the baselines (for user ID $445$ and $\alpha = 0.95$) for different values of $k$.
The performance of \OALGG is within several percent of the performance of \GREEDY
(which we can consider as a tractable optimum),
and the two sieve-based methods \OALGS and \SIEVE display similar objective values.

\paragraph{Already-seen movies.}
We randomly sample a set $E$ of movies already watched by the user ($500$ out of all $3900$ movies).
To obtain a realistic subset,
each movie is sampled proportionally to its popularity (number of ratings).
Figure~\ref{dominatingplots}(e) shows the performance of our algorithm faced with the removal of $E$
(user ID $= 445$, $\alpha = 0.9$) for a range of settings of $k$. 
Again, our algorithm is able to almost match the objective values of \GREEDY (which is aware of $E$ in advance).

Recall that we are able to use the same precomputed summary $S$ for different removed sets $E$.
This summary was built for parameter $w = 1$,
which theoretically allows for up to $k$ removals.
However,
despite having $|E| \gg k$ in the above scenarios,
our performance remains
robust;
this indicates that our method is more resilient in practice
than what the proved bound alone would guarantee.

\section{Conclusion}
We have presented a new robust submodular streaming algorithm \OALG based on a novel partitioning structure and an exponentially decreasing thresholding rule.
It makes one pass over the data and retains a set of size $O\left((k + m \log{k}) \log^2{k} \right)$. We have further shown that after the removal of any $m$ elements, a simple greedy algorithm
that runs on the obtained set achieves a constant-factor approximation guarantee for robust submodular function maximization. In addition, we have 
presented two numerical studies where our method compares favorably against the \SIEVE algorithm that knows in advance which elements will be removed.

\paragraph{Acknowledgment.}
IB and VC's work was supported in part by the European Research Council (ERC) under the European Union's Horizon 2020 research and innovation program (grant agreement number 725594), in part by the Swiss National Science Foundation (SNF), project  407540\_167319/1, in part by the NCCR MARVEL, funded by the Swiss National Science Foundation, in part by Hasler Foundation Switzerland under grant agreement number 16066 and in part by Office of Naval Research (ONR)  under grant agreement number N00014-16-R-BA01.
JT's work was supported by ERC Starting Grant 335288-OptApprox.

\newpage
\bibliographystyle{IEEEtran}
\bibliography{cite}

\appendix

\section{Detailed Proof of Lemma~\ref{lemma:full-buckets}}
\lemmafullbuckets*

\begin{proof}
	Let $\istar$ be a partition such that half of its buckets are full. Let $B_{\istar, j}$ be a full bucket that minimizes $\left| B_{\istar, j} \cap E\right|$.
	In \OALG, every partition contains $\bucketmul \lceil k / 2^{i} \rceil$ buckets. Hence, the number of full buckets in partition $\istar$ is at least $\bucketmul k / 2^{\istar + 1}$.
	That further implies
	\begin{equation}\label{eq:B-istar-j-cap-E}
		\left| B_{\istar, j} \cap E \right| \le \frac{2^{\istar + 1} m}{\bucketmul k}.
	\end{equation}

	Taking into account that $B_{\istar, j}$ is a full bucket, we conclude
	\begin{equation}\label{eq:B-istar-j-minus-E}
		\left| B_{\istar, j} \setminus E \right| \ge \left|B_{\istar, j}\right| - \frac{2^{\istar + 1} m}{\bucketmul k}.
	\end{equation}
	From the property of our Algorithm (line~\ref{line:if-size-and-gain}) every element added to $B_{\istar, j}$ increased the utility of this bucket by at least $\tau / 2^{\istar}$. Combining this with the fact that $B_{\istar, j}$ is full, we conclude that the gain of every element in this bucket is at least 
	 $\tau / \left|B_{\istar, j}\right|$. Therefore, from Eq.~\eqref{eq:B-istar-j-minus-E} it follows:
	\begin{equation}\label{eq:B-istar-j-minus-E-bound}
		f\left(B_{\istar, j} \setminus E \right) \ge \left( \left|B_{\istar, j}\right| - \frac{2^{\istar + 1} m}{\bucketmul k} \right) \frac{\tau}{\left|B_{\istar, j}\right|} = \tau \left(1 - \frac{2^{\istar + 1} m}{\left|B_{\istar, j}\right| \bucketmul k}\right).
	\end{equation}
	Taking into account that $2^{\istar + 1} \le 4 \left|B_{\istar, j}\right|$ this further reduces to
	\begin{equation}
		\label{eq:B-istar-j-minus-E_1}
		f\left(B_{\istar, j} \setminus E \right) \ge \tau \left(1 - \frac{4 m}{\bucketmul k}\right).
	\end{equation}
	Finally,
	\begin{align} 
  	f(Z) = f(\GREEDY(k, S\setminus E)) &\geq (1 - e^{-1}) f(\optSE) \nonumber\\
							&\geq \left(1 - e^{-1}\right) f(\optB) \label{eq:lemma1_1}\\
							&= \left(1 - e^{-1}\right) f\left(B_{\istar, j} \setminus E\right) \label{eq:lemma1_2}\\
							&\geq \left(1 - e^{-1}\right) \left(1 - \frac{4 m}{\bucketmul k}\right) \tau \label{eq:lemma1_3},
	\end{align}
	where Eq.~\eqref{eq:lemma1_1} follows from $ (B_{\istar, j} \setminus E) \subseteq (S \setminus E)$, Eq.~\eqref{eq:lemma1_2} follows from the fact that $|B_{\istar, j}| \leq k$, and Eq.~\eqref{eq:lemma1_3} follows from Eq.~\eqref{eq:B-istar-j-minus-E_1}.
\end{proof}

\section{Detailed Proof of Lemma~\ref{lemma:apply-recursion}}

	We start by studying some properties of $E$ that we use in the proof of Lemma~\ref{lemma:apply-recursion}.

\begin{lemma}\label{lemma:E-i-given-B-i-1}
	Let $B_i$ be a bucket in partition $i  > 0$, and let $E_i := B_i \cap E$ denote the elements that are removed from this bucket. 
	Given a bucket $B_{i-1}$ from the previous partition such that $|B_{i-1}| < 2^{i-1}$ (i.e. $B_{i-1}$ is not fully populated), the loss in the bucket $B_i$ due to the removals is at most
	\[
		\margain{E_i}{B_{i - 1}} < \frac{\tau}{2^{i - 1}} |E_i|.
	\]
\end{lemma}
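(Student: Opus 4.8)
The plan is to trace, for each individual removed element, \emph{why} the algorithm declined to place it into the earlier non-full bucket $B_{i-1}$, and then to aggregate these per-element bounds using submodularity. First I would fix an arbitrary $e \in E_i$. Since $e \in B_i$, Algorithm~\ref{alg:creating-S} assigned $e$ to a bucket in partition $i$; because the inner loops iterate over partitions in increasing order and then over buckets, $B_{i-1}$ was examined at line~\ref{line:if-size-and-gain} before $e$ ever reached partition $i$. Let $B_{i-1}^{(e)}$ denote the contents of $B_{i-1}$ at the instant $e$ was processed. By hypothesis $|B_{i-1}| < 2^{i-1}$, so $B_{i-1}$ is not full at the end of the stream; since buckets only ever grow, it was also not full when $e$ was processed, i.e. $|B_{i-1}^{(e)}| < 2^{i-1} = \min\{2^{i-1}, k\}$ (the equality holding because $2^{i-1} \le 2^{\ceillogk - 1} < k$ for every relevant $i$). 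Consequently the size condition at line~\ref{line:if-size-and-gain} was satisfied for $B_{i-1}$, so the \emph{only} reason the algorithm skipped it is that the thresholding condition failed, giving
\[
	\margain{e}{B_{i-1}^{(e)}} < \frac{\tau}{2^{i-1}}.
\]

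Next I would transfer this bound from the snapshot $B_{i-1}^{(e)}$ to the final bucket $B_{i-1}$. As buckets only accumulate elements we have $B_{i-1}^{(e)} \subseteq B_{i-1}$, and since the buckets are disjoint, $e \notin B_{i-1}$; submodularity of $f$ then yields $\margain{e}{B_{i-1}} \le \margain{e}{B_{i-1}^{(e)}} < \tau / 2^{i-1}$ for every $e \in E_i$. Finally, to pass from single elements to the whole removed set, I would invoke the standard subadditivity of marginal gains implied by submodularity, namely
\[
	\margain{E_i}{B_{i-1}} \le \sum_{e \in E_i} \margain{e}{B_{i-1}} < \sum_{e \in E_i} \frac{\tau}{2^{i-1}} = \frac{\tau}{2^{i-1}} \, |E_i|,
\]
which is exactly the claimed inequality (the case $E_i = \emptyset$ being trivial, and the strict inequality otherwise following from the strict per-element bounds).

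The main obstacle is the bookkeeping in the first step: one must argue cleanly that ``$B_{i-1}$ is not full at the end'' implies ``$B_{i-1}$ is not full when $e$ was processed,'' and that the fixed partition-then-bucket iteration order guarantees $B_{i-1}$ is inspected before $e$ can land in $B_i$. Both facts rest on the append-only nature of the buckets together with the deterministic loop structure of Algorithm~\ref{alg:creating-S}, but they are precisely what licenses reading off the sub-threshold marginal gain from the failed assignment. Once that snapshot inequality is secured, the two submodularity manipulations (monotonicity of marginal gains under set inclusion, and subadditivity over $E_i$) are entirely routine.
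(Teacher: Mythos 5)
Your proof is correct and follows essentially the same route as the paper's: for each $e \in E_i$, use the fact that the non-full bucket $B_{i-1}$ was inspected and rejected $e$ to get the per-element bound $\margain{e}{B_{i-1}} < \tau/2^{i-1}$, then sum over $E_i$ by subadditivity of marginal gains. If anything, your explicit snapshot argument ($B_{i-1}^{(e)} \subseteq B_{i-1}$ together with submodularity, plus the observation that $\min\{2^{i-1},k\} = 2^{i-1}$) rigorously justifies the step that the paper handles more informally via a case split on $f(e)$ and the phrase ``otherwise the streaming algorithm would place $e$ in $B_{i-1}$.''
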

\begin{proof}
	First, we can bound $\margain{E_i}{B_{i - 1}}$ as follows
	\begin{equation}\label{eq:margain-Ei-Bi-1}
		\margain{E_i}{B_{i - 1}} \le \sum_{e \in E_i} \margain{e}{B_{i - 1}}.
	\end{equation}
	Consider a single element $e \in E_i$. There are two possible cases: $f(e) < \tfrac{\tau}{2^{i - 1}}$, and $f(e) \ge \tfrac{\tau}{2^{i - 1}}$. In the first case, $\margain{e}{B_{i - 1}} \le f(e) < \tfrac{\tau}{2^{i - 1}}$. In the second one, as $|B_{i - 1}| < 2^{i - 1}$ we conclude $\margain{e}{B_{i - 1}} < \tfrac{\tau}{2^{i - 1}}$, as otherwise the streaming algorithm would place $e$ in $B_{i - 1}$. These observations together with~\eqref{eq:margain-Ei-Bi-1} imply:
	\[
		\margain{E_i}{B_{i - 1}} < \sum_{e \in E_i} \frac{\tau}{2^{i - 1}} = \frac{\tau}{2^{i - 1}} |E_i|.
	\]
\end{proof}
\begin{lemma}\label{lemma:recursive}
	For every partition $i$, let $B_i$ denote a bucket such that $|B_{i}| < 2^i$ (i.e. no partition is fully populated), and let $E_i=B_i \cap E$ denote the elements that are removed from $B_i$. The loss in the bucket $B_{\ceillogk}$ due to the
	removals, given all the remaining elements in the previous buckets, is at most  
	\[
		\margain{E_{\ceillogk}}{\bigcup_{j = 0}^{\ceillogk - 1}\left(B_j \setminus E_j \right)} \le \sum_{j = 1}^{\ceillogk} \frac{\tau}{2^{j - 1}} |E_j|.
	\]
\end{lemma}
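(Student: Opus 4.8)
The plan is to reduce the multi-partition bound to the single-step estimate of Lemma~\ref{lemma:E-i-given-B-i-1} through a telescoping argument in which the removed sets $E_j$ are peeled off one partition at a time, from the last partition down to the first. Write $\ell := \ceillogk$ and let $B := \bigcup_{j=0}^{\ell - 1} B_j$ be the union of the tracked buckets in the first $\ell$ partitions \emph{before} any removal. Recalling that the buckets are pairwise disjoint, I would define the chain of sets
\[
	P_\ell := B, \qquad P_j := P_{j+1} \setminus E_j \quad \text{for } j = \ell - 1, \ell - 2, \dots, 0,
\]
so that $P_0 = \bigcup_{j=0}^{\ell - 1} (B_j \setminus E_j)$ is exactly the conditioning set appearing in the statement, while each $P_j$ still contains the buckets $B_0, \dots, B_{j-1}$ intact. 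The quantity to bound is therefore $\margain{E_\ell}{P_0}$, and I would write it as the telescoping sum
\[
	\margain{E_\ell}{P_0} = \margain{E_\ell}{P_\ell} + \sum_{j=0}^{\ell - 1} \Big( \margain{E_\ell}{P_j} - \margain{E_\ell}{P_{j+1}} \Big).
\]

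For the base term, $B_{\ell - 1} \subseteq P_\ell$, so submodularity gives $\margain{E_\ell}{P_\ell} \le \margain{E_\ell}{B_{\ell - 1}}$, and Lemma~\ref{lemma:E-i-given-B-i-1} (applicable since $B_{\ell-1}$ is not full) bounds this by $\tfrac{\tau}{2^{\ell - 1}} |E_\ell|$. For each telescoping difference I would use $P_{j+1} = P_j \cup E_j$ (disjointness) together with the elementary rearrangement
\[
	\margain{E_\ell}{P_j} - \margain{E_\ell}{P_{j+1}} = \margain{E_j}{P_j} - \margain{E_j}{P_j \cup E_\ell} \le \margain{E_j}{P_j},
\]
where the inequality drops the nonnegative second marginal (monotonicity). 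The key point is that $P_j$ still contains the \emph{full} predecessor bucket $B_{j-1}$, so submodularity and Lemma~\ref{lemma:E-i-given-B-i-1} again yield $\margain{E_j}{P_j} \le \margain{E_j}{B_{j-1}} < \tfrac{\tau}{2^{j-1}} |E_j|$ for $j \ge 1$. For $j = 0$ the bucket $B_0$ has capacity $\min\{2^0, k\} = 1$, so being not full forces $B_0 = \emptyset$ and hence $E_0 = \emptyset$, making that term vanish. Summing the base term ($j = \ell$) with the differences ($j = 1, \dots, \ell - 1$) gives precisely $\sum_{j=1}^{\ell} \tfrac{\tau}{2^{j-1}} |E_j|$.

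The step I expect to be the crux is choosing the peeling order. The single-step lemma can only be invoked against the \emph{unreduced} bucket $B_{j-1}$, not against $B_{j-1} \setminus E_{j-1}$; removing the $E_j$ from highest to lowest partition index is exactly what keeps $B_0, \dots, B_{j-1}$ intact inside $P_j$ at the moment the bound is applied. Everything else is routine: the telescoping cancellation, the identity $P_{j+1} = P_j \cup E_j$ from bucket disjointness, and the two monotonicity/submodularity inequalities.
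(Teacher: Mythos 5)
Your proof is correct and is essentially the paper's argument in unrolled form: the paper proves the same bound by induction on the partition index, where each inductive step performs exactly your peeling move --- restoring the unreduced bucket $B_{i-1}$ via the same add-and-subtract rearrangement identity, discarding a nonnegative marginal by monotonicity, and bounding $\margain{E_i}{B_{i-1}}$ with Lemma~\ref{lemma:E-i-given-B-i-1} --- so both proofs reach the identical decomposition $\sum_{j=1}^{\ceillogk} \margain{E_j}{B_{j-1}}$ with the same key lemma and the same top-down peeling order. The difference is purely organizational (your explicit telescoping chain $P_j$ versus the paper's induction over prefixes of reduced buckets); your argument is sound throughout, including the disjointness identity $P_{j+1} = P_j \cup E_j$ and the degenerate $j=0$ case, though note that the bucket $B_{j-1}$ you condition on should be called \emph{unreduced} rather than ``full,'' since in the paper's terminology ``full'' means at capacity, which these buckets are precisely not.
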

\begin{proof}
	We proceed by induction. More precisely, we show that for any $i \ge 1$ the following holds
	\begin{equation}\label{eq:inductive-inequality}
		\margain{E_i}{\bigcup_{j = 0}^{i - 1}\left(B_j \setminus E_j \right)} \le \sum_{j = 1}^{i} \frac{\tau}{2^{j - 1}} |E_j|.
	\end{equation}
	Once we show that~\eqref{eq:inductive-inequality} holds, the lemma will follow immediately by setting $i = \ceillogk$.
	
	\paragraph{Base case $i = 1$.}
	Since $B_0$ is not fully populated and the maximum number of elements in the partition $i=0$ is $1$, it follows that both $B_0$ and $E_0$ are empty. Then the term on the left hand side 
	of \eqref{eq:inductive-inequality} for $i = 1$ becomes $f(E_1)$. As $|B_0| < 1$ we can apply Lemma~\ref{lemma:E-i-given-B-i-1} to obtain
	\[
		f(E_1) = \margain{E_1}{B_0} \le |E_1| \frac{\tau}{2^0}.
	\]
	\paragraph{Inductive step $i > 1$.}	
		Now we show that~\eqref{eq:inductive-inequality} holds for $i > 1$, assuming that it holds for $i - 1$. First, due to submodularity we have 
		\[ \margain{E_{i - 1}}{\bigcup_{j = 0}^{i - 2} \left(B_j \setminus E_j\right)} \geq \margain{E_{i - 1}}{\bigcup_{j = 0}^{i - 1} \left(B_j \setminus E_j\right)}, \] and, hence, we can write
	\begin{align}
		\margain{E_i}{\bigcup_{j = 0}^{i - 1} \left(B_j \setminus E_j\right)} 
		 & \le \margain{E_i}{\bigcup_{j = 0}^{i - 1} \left(B_j \setminus E_j\right)} + \margain{E_{i - 1}}{\bigcup_{j = 0}^{i - 2} \left(B_j \setminus E_j\right)} - \margain{E_{i - 1}}{\bigcup_{j = 0}^{i - 1} \left(B_j \setminus E_j\right)} \nonumber \\
		& = f \left(E_i \cup  \bigcup_{j = 0}^{i - 1} \left(B_j \setminus E_j\right)\right) + \margain{E_{i - 1}}{\bigcup_{j = 0}^{i - 2} \left(B_j \setminus E_j\right)} - f \left(E_{i - 1} \cup \bigcup_{j = 0}^{i - 1} \left(B_j \setminus E_j\right)\right). \label{eq:E-i-given-union}
	\end{align}
	Due to monotonicity, the first term can be further bounded by
	\begin{equation}
		 f \left(E_i \cup  \bigcup_{j = 0}^{i - 1} \left(B_j \setminus E_j\right)\right) \leq f \left(E_i \cup B_{i - 1} \cup \bigcup_{j = 0}^{i - 2} \left(B_j \setminus E_j\right)\right),\label{eq:E-i-1}
	\end{equation} and for the third term we have
	\begin{equation}
		f \left(E_{i - 1} \cup \bigcup_{j = 0}^{i - 1} \left(B_j \setminus E_j\right)\right) = f \left(E_{i - 1} \cup B_{i - 1} \cup \bigcup_{j = 0}^{i - 2} \left(B_j \setminus E_j\right)\right)
		\geq f \left(B_{i - 1} \cup \bigcup_{j = 0}^{i - 2} \left(B_j \setminus E_j\right)\right), \label{eq:E-i-1-2}
	\end{equation}
	where to obtain the identity we used that $E_{i - 1} \cup \left(B_{i - 1} \setminus E_{i - 1} \right) = E_{i - 1} \cup B_{i - 1}$. 

	By substituting the obtained bounds~\eqref{eq:E-i-1} and~\eqref{eq:E-i-1-2} in~\eqref{eq:E-i-given-union} we obtain: 
	\begin{align}
		\margain{E_i}{\bigcup_{j = 0}^{i - 1} \left(B_j \setminus E_j\right)}
		& \leq \margain{E_i}{B_{i - 1} \cup \bigcup_{j = 0}^{i - 2} \left(B_j \setminus E_j\right)} + \margain{E_{i - 1}}{\bigcup_{j = 0}^{i - 2} \left(B_j \setminus E_j\right)} \nonumber \\
		& \leq \margain{E_i}{B_{i - 1}} + \margain{E_{i - 1}}{\bigcup_{j = 0}^{i - 2} \left(B_j \setminus E_j\right)} \label{eq:induction-i-last},
	\end{align}
	where the second inequality follows by submodularity.

	Next, Lemma~\ref{lemma:E-i-given-B-i-1} can be used (as $|B_{i-1}| < 2^{i-1}$) to bound the first term in~\eqref{eq:induction-i-last}:
	\begin{equation}\label{eq:E-i-simplified}
		\margain{E_i}{\bigcup_{j = 0}^{i - 1} \left(B_j \setminus E_j\right)} \le  \frac{\tau}{2^{i - 1}}|E_i| + \margain{E_{i - 1}}{\bigcup_{j = 0}^{i - 2} \left(B_j \setminus E_j\right)}.
	\end{equation}
	To conclude the proof, we use the inductive hypothesis that~\eqref{eq:inductive-inequality} holds for $i - 1$, which together with~\eqref{eq:E-i-simplified} implies
	\[
		\margain{E_i}{\bigcup_{j = 0}^{i - 1} \left(B_j \setminus E_j\right)} \le  \frac{\tau}{2^{i - 1}}|E_i| + \sum_{j = 1}^{i - 1} \frac{\tau}{2^{j - 1}}|E_j|  = \sum_{j = 1}^{i} \frac{\tau}{2^{j - 1}}|E_j|,
	\]
	as desired.
\end{proof}
	\lemmaapplyrecursion*
	\begin{proof}
	Let $B_i$ denote a bucket in partition $i$ which is not fully populated ($B_i \leq \min \lbrace 2^i, k \rbrace$), and for which $|E_i|$, where $E_i = B_i \cap E$, is of minimum cardinality. Such bucket exists in every partition $i$ due to the assumption of the lemma that more than a half of the buckets are not fully populated.
	
	First, 
	\begin{align} 
		f\left(\BU \right) &\geq f\left(B_{\ceillogk}\right) - f\left(E_{\ceillogk}\bigg| \BUMO\right) \label{eq:B-union-apply-lemma-D} \\
						 &\geq f\left(B_{\ceillogk}\right) - \sum_{i = 1}^{\ceillogk} \frac{\tau}{2^{i - 1}}|E_i| \label{eq:B-union-apply-lemma},
	\end{align}
	where Eq.~\eqref{eq:B-union-apply-lemma-D} follows from Lemma~\ref{lemma:A-B-R} by setting $B = B_{\ceillogk}$, $R=E_{\ceillogk}$ and $A = \BUMO$. As we consider buckets that are not fully populated, Lemma~\ref{lemma:recursive} is used to obtain Eq.~\eqref{eq:B-union-apply-lemma}. Next, we bound each term $\tfrac{\tau}{2^{i - 1}}|E_i|$ in Eq.~\eqref{eq:B-union-apply-lemma} independently.
	
	From Algorithm~\ref{alg:creating-S} we have that partition $i$ consists of $\bucketmul \lceil k/2^i \rceil$ buckets. By the assumption of the lemma, more than half of those are not fully populated. Recall that $B_i$ is defined to be a bucket of partition $i$ which is not fully populated and which minimizes $|E_i|$. Let $\tE_i$ be the subset of $E$ that intersects buckets of partition $i$. Then, $|E_i|$ can be bounded as follows:
	\[
		|E_i| \le \frac{|\tE_i|}{\frac{\bucketmul \lceil k/2^i \rceil}{2}} \le \frac{2^{i + 1} |\tE_i|}{\bucketmul k}.
	\]
	Hence, the sum on the left hand side of Eq.~\eqref{eq:B-union-apply-lemma} can be bounded as
	\[
		\sum_{i = 1}^{\ceillogk} \frac{\tau}{2^{i - 1}}|E_i| \le \sum_{i = 1}^{\ceillogk} \frac{\tau}{2^{i - 1}} \frac{2^{i + 1} |\tE_i|}{\bucketmul k} = \frac{4}{\bucketmul k} \tau \sum_{i = 1}^{\ceillogk} |\tE_i| \le \frac{4 |E|}{\bucketmul k} \tau.
	\]
	Putting the last inequality together with Eq.~\eqref{eq:B-union-apply-lemma} we obtain
	\[
		f\left(\BU \right) \ge f\left(B_{\ceillogk}\right) - \frac{4|E|}{\bucketmul k} \tau.
	\]
	Observe also that
	\[
		\bigcup_{i=0}^{\ceillogk} |B_i \setminus E_i| \le \bigcup_{i=0}^{\ceillogk} |B_i| \le k + \bigcup_{i=0}^{\lfloor \log{k} \rfloor}{2^i} \le 3k,
	\]
	which implies
	\[
		f\left(\OPT(3 k, S \setminus E) \right) \ge f\left(\BU \right) \ge f\left(B_{\ceillogk}\right) - \frac{4 |E|}{\bucketmul k} \tau.
	\]
	Finally,
	\begin{align}
		f(Z) = f(\GREEDY(k, S\setminus E)) &\geq \left(1 - e^{-1/3}\right) f\left(\OPT(3 k, S \setminus E)\right) \nonumber \\
			 &\geq \left(1 - e^{-1/3}\right) \left(f\left(B_{\ceillogk}\right) - \frac{4 |E|}{\bucketmul k} \tau\right) \nonumber \\
			&\geq \left(1 - e^{-1/3}\right) \left(f\left(B_{\ceillogk}\right) - \frac{4 m}{\bucketmul k} \tau\right),
	\end{align}	
	as desired.
\end{proof}

\section{Detailed Proof of Lemma~\ref{lemma:lower_bound}}
\lemmalowerbound*
\begin{proof}
Let $B_\ceillogk $ denote a bucket in the last partition which is not fully populated. Such bucket exists due to the assumption of the lemma that more than a half of the buckets are not fully populated.

Let $X$ and $Y$ be two sets such that $Y$ contains all the elements from $\opt$ that are placed in the buckets that precede bucket $B_\ceillogk $ in $S$, and let $X := \opt \setminus Y$. In that case, for every $e \in X$ we have
\begin{equation}
	\margain{e}{B_\ceillogk} < \frac{\tau}{k} \label{eq:last_bucket}
\end{equation}
due to the fact that $B_\ceillogk$ is the bucket in the last partition and is not fully populated.

We proceed to bound $f(Y)$:
\begin{align}
	f(Y) &\geq f(\opt) - f(X) \label{eq:lemma3_1}\\
	& \geq f(\opt) - \margain{X}{B_\ceillogk} - f\left(B_\ceillogk\right) \label{eq:lemma3_2}\\
	& \geq f(\opt) - f\left(B_\ceillogk\right) - \sum_{e \in X} \margain{e}{B_\ceillogk} \label{eq:lemma3_3}\\
	& \geq f(\opt) - f\left(B_\ceillogk\right) - \frac{\tau}{k} |X| \label{eq:lemma3_4}\\
	& \geq f(\opt) - f\left(B_\ceillogk\right) - \tau \label{eq:lemma3_5},
\end{align}
where Eq.~\eqref{eq:lemma3_1} follows from $f(\opt) = f(X \cup Y)$ and submodularity, Eq~\eqref{eq:lemma3_2} and Eq~\eqref{eq:lemma3_3} follow from monotonicity and submodularity, respectively. Eq.~\eqref{eq:lemma3_4} follows from Eq.~\eqref{eq:last_bucket}, and Eq.~\eqref{eq:lemma3_5}
follows from $|X| \leq k$.

Finally, we have:
\begin{align} 
  	f(Z) = f(\GREEDY(k, S\setminus E)) &\geq \left(1 - e^{-1}\right) f(\optSE) \nonumber\\
							&\geq \left(1 - e^{-1}\right) f(\optY) \label{eq:lemma3_6}\\
							&= \left(1 - e^{-1}\right) f(Y) \label{eq:lemma3_7}\\
							&\geq \left(1 - e^{-1}\right) \big(f(\opt) - f(B_\ceillogk) - \tau \big) \label{eq:lemma3_8},
\end{align}
where Eq.~\eqref{eq:lemma3_6} follows from $Y \subseteq (S\setminus E)$, Eq.~\eqref{eq:lemma3_7} follows from $|Y| \leq k$, and Eq.~\eqref{eq:lemma3_8} follows from Eq.~\eqref{eq:lemma3_5}.
\end{proof}

\section{Technical Lemma}
Here, we outline a technical lemma that is used in the proof of Lemma~\ref{lemma:apply-recursion}
\begin{lemma}\label{lemma:A-B-R}
     For any submodular function $f$ on a ground set $V$, and any sets $A,B,R \subseteq V$, we have
     \[
         f(A \cup B) - f(A \cup (B \setminus R)) \le \margain{R}{A}.
     \]
 \end{lemma}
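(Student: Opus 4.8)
The plan is to recognize the left-hand side as a marginal gain and then reduce it to $\margain{R}{A}$ in two moves, one using submodularity and one using monotonicity. First I would decompose $A \cup B$ according to whether elements of $B$ lie in $R$: since $B = (B \setminus R) \sqcup (B \cap R)$ is a disjoint union, we have $A \cup B = \left(A \cup (B \setminus R)\right) \cup (B \cap R)$. Writing $C := A \cup (B \setminus R)$, the left-hand side is exactly
\[
	f(A \cup B) - f(A \cup (B \setminus R)) = f\!\left(C \cup (B \cap R)\right) - f(C) = \margain{B \cap R}{C}.
\]

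Next I would lower the base set from $C$ down to $A$. Since $A \subseteq C$, the diminishing-returns property of submodular functions gives $\margain{B \cap R}{C} \le \margain{B \cap R}{A}$. Finally, because $B \cap R \subseteq R$, we have $A \cup (B \cap R) \subseteq A \cup R$, so monotonicity yields $f\!\left(A \cup (B \cap R)\right) \le f(A \cup R)$, which after subtracting $f(A)$ from both sides gives $\margain{B \cap R}{A} \le \margain{R}{A}$. Chaining the three relations proves the claim. An equivalent one-shot route is to apply the union–intersection form of submodularity to the pair $S = A \cup (B \setminus R)$ and $T = A \cup R$, for which $S \cap T = A$ and $S \cup T = A \cup B \cup R$; combined with the monotonicity bound $f(A \cup B \cup R) \ge f(A \cup B)$, this delivers the inequality directly.

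The only genuine subtlety is that the excerpt states submodularity solely in its single-element form, whereas both the diminishing-returns step $\margain{B \cap R}{C} \le \margain{B \cap R}{A}$ and the union–intersection inequality are set-level statements. I expect the main (minor) obstacle to be justifying this set-form from the given pointwise definition: one orders the elements of $B \cap R$ as $e_1, \dots, e_\ell$ and telescopes, applying the single-element inequality at each step to the nested pair $A \cup \{e_1, \dots, e_{t-1}\} \subseteq C \cup \{e_1, \dots, e_{t-1}\}$, then summing. The set identities themselves (the disjointness of $B \setminus R$ and $B \cap R$, and the computation of $S \cap T$ and $S \cup T$) are routine once stated carefully.
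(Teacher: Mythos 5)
Your proof is correct, and it takes a genuinely (if mildly) different route from the paper's. You decompose $B$ along $R$: writing $C := A \cup (B\setminus R)$, the left-hand side is exactly $\margain{B\cap R}{C}$, which you shrink to $\margain{B\cap R}{A}$ by set-level submodularity and then enlarge to $\margain{R}{A}$ by monotonicity. The paper instead decomposes $R$ along $A$: setting $R_1 := R\setminus A$ and $R_2 := R\cap A$, it rewrites the left-hand side as $f(A\cup B) - f((A\cup B)\setminus R_1) = \margain{R_1}{(A\cup B)\setminus R_1}$, shrinks the base to $A$ by submodularity, and absorbs $R_2$ for free since $R_2 \subseteq A$ --- nominally with no appeal to monotonicity at all. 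Your telescoping justification of the set-level diminishing-returns inequality from the paper's single-element definition is exactly the right (and standard) patch, and your one-shot variant also checks out: with $S = A\cup(B\setminus R)$ and $T = A\cup R$ one indeed gets $S\cap T = A$ and $S\cup T = A\cup B\cup R$, after which monotonicity closes the gap between $f(A\cup B\cup R)$ and $f(A\cup B)$.

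One point in your favor is worth making explicit: your visible use of monotonicity is not a defect relative to the paper. The paper's first step is an equality only when $R_1 \subseteq A\cup B$; in general $\left((A\cup B)\setminus R_1\right)\cup R_1 = (A\cup B)\cup R_1 \supseteq A\cup B$, so that step too silently requires monotonicity (as an inequality in the right direction). Indeed, the lemma as stated for ``any submodular function'' and arbitrary $R$ is false without monotonicity: take $V=\{b,r\}$ with $f(\emptyset)=0$, $f(\{b\})=f(\{r\})=1$, $f(\{b,r\})=\tfrac12$ (submodular, non-monotone), and $A=\emptyset$, $B=\{b\}$, $R=\{b,r\}$; the left side is $1$ while $\margain{R}{A}=\tfrac12$. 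Since in the paper's application one has $R = E_{\ceillogk} \subseteq B$ and $f$ monotone, both proofs are valid there, but yours makes the dependence on monotonicity honest rather than implicit.
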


 \begin{proof}
     Define $R_2 := A \cap R$, and $R_1 := R \setminus A = R \setminus R_2$. We have
 \begin{align}
       f(A \cup B) - f(A \cup (B \setminus R))
     & =  f(A \cup B) - f((A \cup B) \setminus R_1) \nonumber \\
     & =  \margain{R_1}{(A \cup B) \setminus R_1} \nonumber\\
     & \le  \margain{R_1}{(A \setminus R_1)} \label{eq:ABR_1} \\
     & =  \margain{R_1}{A} \label{eq:ABR_2} \\
     & =  \margain{R_1 \cup R_2}{A} \label{eq:ABR_3} \\
     & =  \margain{R}{A} \nonumber ,
 \end{align}    
 where~\eqref{eq:ABR_1} follows from the submodularity of $f$, \eqref{eq:ABR_2} follows since $A$ and $R_1$ are disjoint, and \eqref{eq:ABR_3} follows since $R_2 \subseteq A$.
 \end{proof}

\section{Detailed Proof of Theorem~\ref{thm:parallel-instances}}
\label{app:parallel-instances}
 Setting $\tau$ in \OALG assumes that we know the unknown value $f(\opt)$. In this subsection we show how to approximate that value. First, $f(\opt)$ can be bounded in the following way: $\eta \leq f(\opt) \leq k \eta$, where $\eta$ denotes the largest value of any of the elements of $V \setminus E$, i.e. $\eta = \max_{e \in (V \setminus E)} f(e)$. In case we are given $\eta$, we follow the same approach as in~\cite{badanidiyuru2014streaming} by considering all the $O\left( \logepsk \right)$ possible values of $f(\opt)$ from the set $\lbrace (1 + \epsilon)^i\ |\ i \in \bbZ, \eta \leq (1 + \epsilon)^i \leq k \eta \rbrace$. For each of the thresholds independently and in parallel we then run $\OALG$, and hence build $O\left( \logepsk \right)$ different summaries. After the stream ends, on each of the summaries we run algorithm $\OALG$-$\GREEDY$ and report the maximum output over all the runs. As this approach runs $O(\logepsk)$ copies of our algorithm, it requires $O(\logepsk)$ more memory space than stated in Theorem~\ref{thm:main-theorem}. Furthermore, since we are approximating $f(\opt)$ as the geometric series with base $(1 + \epsilon)$, our final result is an $(1 + \epsilon)$-approximation of the value provided in the theorem. 
\begin{algorithm}[h!]
	\caption{Parallel Instances of (\OALG)}
	\label{alg:parallel-instances}
	\begin{algorithmic}[1]
		\Require Set $V$, $k$, $\bucketmul \in \bbNp $, $\eta \in \bbR$
		\State $O = \left\{(1 + \epsilon)^i\ |\ \eta \le (1 + \epsilon)^i \le k \eta\right\}$
		\State Create a set of instances $\cI := \left\{ \OALG(V, k, \eta, \bucketmul)\ |\ \eta \in O\right\}$, and run all the instances in parallel over the stream.
		\State Let $\cS = \left\{\text{the output of instance $I$ }\ |\ I \in \cI \ \right\}$.\\
		\Return $\cS$
	\end{algorithmic}
\end{algorithm}
\begin{algorithm}[h!]
	\caption{Parallel Instances \OALG - \GREEDY}
	\label{alg:parallel-final-output}
	\begin{algorithmic}[1]
		\Require Family of sets $\cS$, query set $E$ and $k$
		\State $Z \gets \arg \max_{S \in \cS} \GREEDY (k, S \setminus E)$\\
		\Return $Z$
	\end{algorithmic}
\end{algorithm}

Unfortunately, the value $\eta$ might also not be known a priori. However, $\eta$ is some value among the $m + 1$ largest elements of the stream. This motivates the following idea. At every moment, we keep $m + 1$ largest elements of the stream. Let $L$ denote that set (note that $L$ changes during the course of the stream). Then, for different values of $\eta$ belonging to the set $\{f(e)\ |\ e \in L\}$ we approximate $f(\opt)$ as described above. Here we make a minor difference, as also described in~\cite{badanidiyuru2014streaming}. Namely, instead of instantiating all the copies of the algorithm corresponding to $\eta \le (1 + \epsilon)^i \le k m$, we instantiate copies of the algorithm corresponding to the values of $f(\opt)$ from the set $\lbrace (1 + \epsilon)^i\ |\ i \in \bbZ, \eta \leq (1 + \epsilon)^i \leq 2 k \eta\rbrace$. We do so as an element $e$ can belong to an instance of our algorithm even if $f(\opt) = 2 k f(e)$.

Next, let $e$ be a new element that arrives on the stream. If $e$ is not among the $m + 1$ largest elements of the stream seen so far, we do not instantiate any new copy of our algorithm. On the other hand, if $e$ should replace another element $e' \in L$ because $e'$ does not belong to the $m + 1$ largest elements of the stream anymore, we redefine $L$ to be $(L \setminus \{e'\}) \cup \{e\}$, and update the instances. The instances are updated as follows: we instantiate copies (those that do not exist already) of our algorithm for $\eta = f(e)$ as described above; and, any instance of our algorithm corresponding to $\eta = f(e')$, but not to any other element of $L$, we discard.

To bound the space complexity, we start with the following observation -- given an element $e$, we do not need to add $e$ to any instance of our algorithm corresponding to $f(\opt) < f(e)$. This reasoning is justified by the following: if $e \in E$, then it does not matter whether we keep $e$ in our summary or not; if $e \notin E$, then $f(\opt) \ge f(e)$. Therefore, those thresholds that are less than $f(e)$ are not a good estimate of the optimum solution with respect to $e$. To keep the memory space low, we pass an element $e$ to the instances of our algorithm corresponding to the of $f(\opt)$ being in set $\{(1 + \epsilon)^i\ |\ i \in \bbZ, f(e) \leq (1 + \epsilon)^i \leq 2 k f(e)\}$. Notice that, by the structure of our algorithm, $e$ will not be added to any instance of our algorithm with threshold more than $2kf(e)$.

Putting all together we make the following conclusions. At any point during the execution,  every element of $L$ belongs to at most $O(\logepsk)$ instances of our algorithm. Define $\emin := \arg \min_{e \in L} f(e)$. Then by the definition, every element $a \notin L$ kept in the parallel instances of our algorithms is such that $f(a) \le f(\emin)$. This further implies that $a$ also belongs to at most $O(\logepsk)$ instances corresponding to the following set of values $\{(1 + \epsilon)^i\ |\ i \in \bbZ, f(\emin) \leq (1 + \epsilon)^i \leq 2 k f(\emin)\}$. Therefore, the total memory usage of the elements of $L$ is $O\left(m \logepsk\right)$. On the other hand, since all the elements not in $L$ belong to at most $O(\logepsk)$ different instances of $\OALG$, the total memory those elements occupy is $O((k + m \log{k}) \log{k} \logepsk)$. Therefore, the memory complexity of this approach is $O\left((k + m \log{k}) \log{k} \logepsk\right)$

\section{Additional results for the dominating set problem}
\label{further-plots}
In Figure~\ref{furtherplots} we outline further results for the dominating set problem considered in Section~\ref{dom-set-problem}.

\begin{figure}[h!]
\minipage{0.34\textwidth}
  \includegraphics[width=\linewidth]{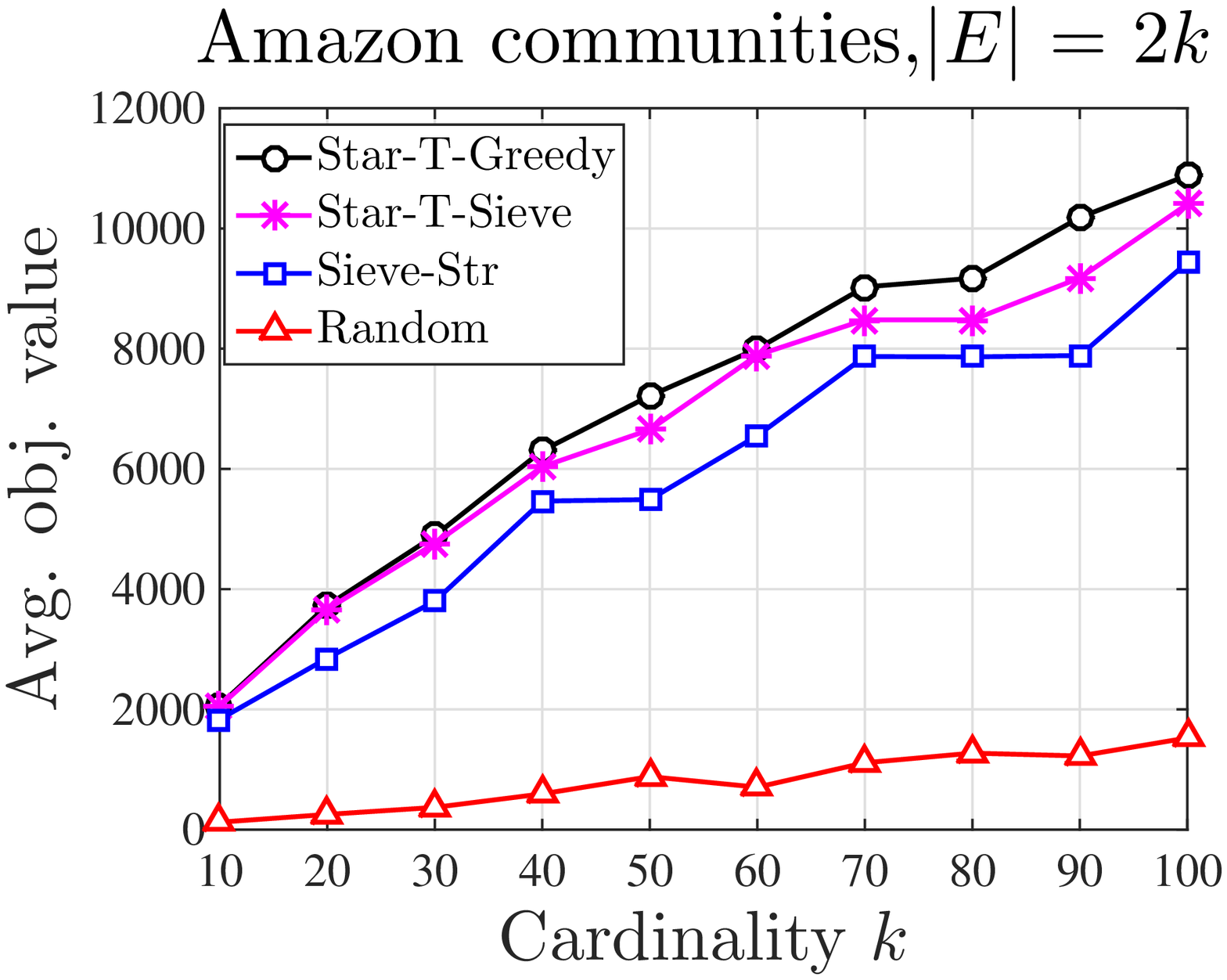}
\endminipage
\minipage{0.34\textwidth}
  \includegraphics[width=\linewidth]{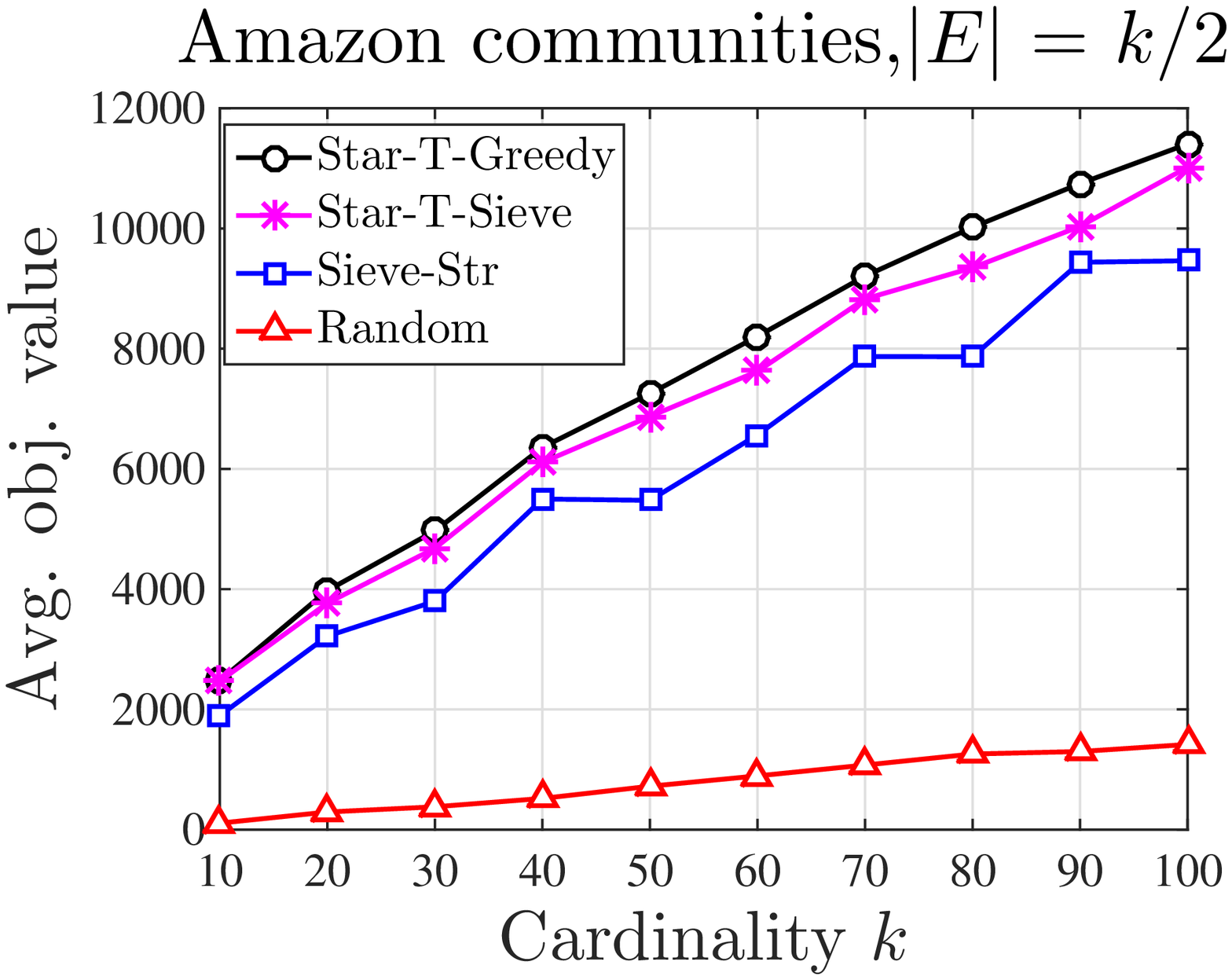}
\endminipage
\minipage{0.34\textwidth}%
  \includegraphics[width=\linewidth]{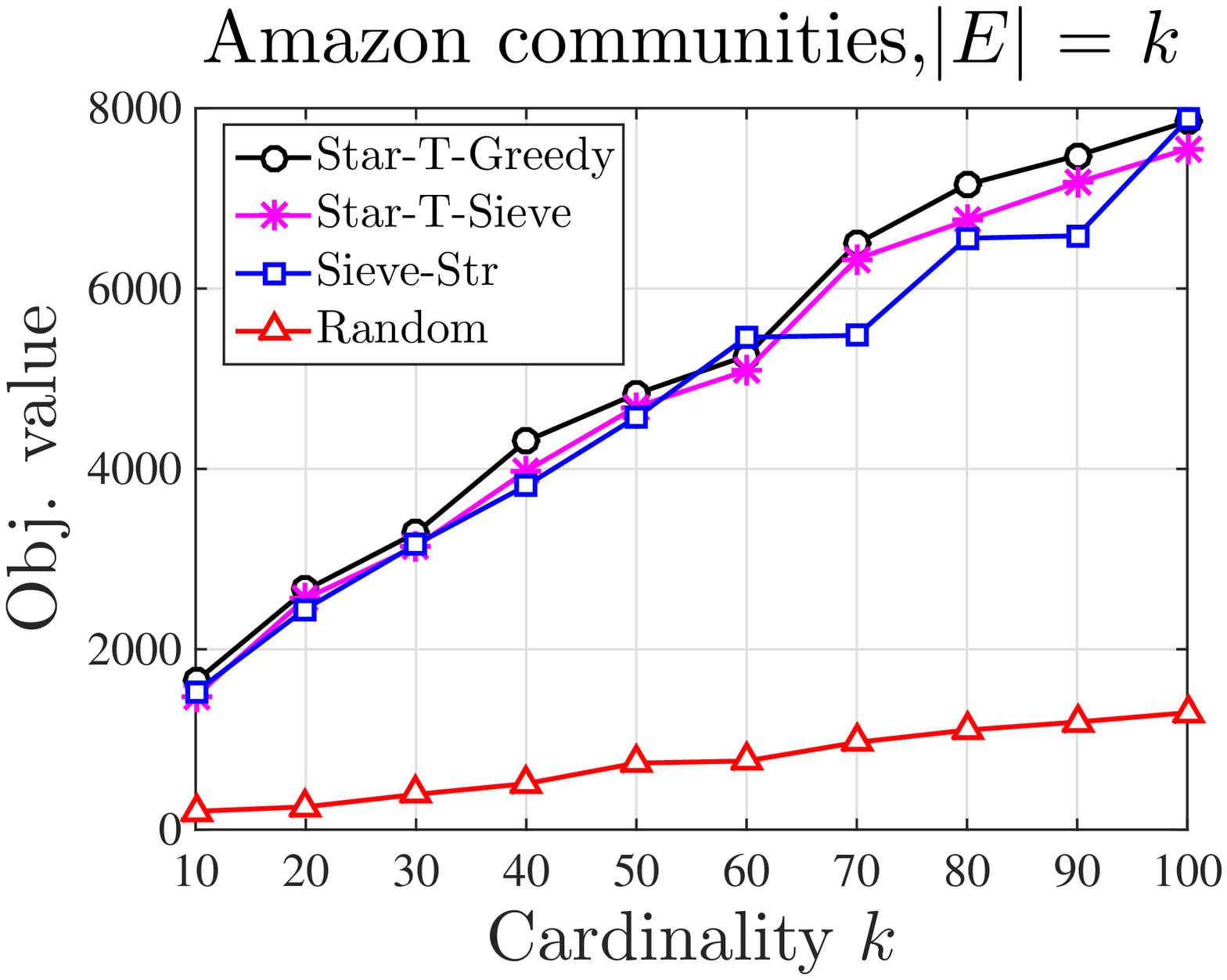}
\endminipage
\vspace{2mm}
\minipage{0.34\textwidth}%
  \includegraphics[width=\linewidth]{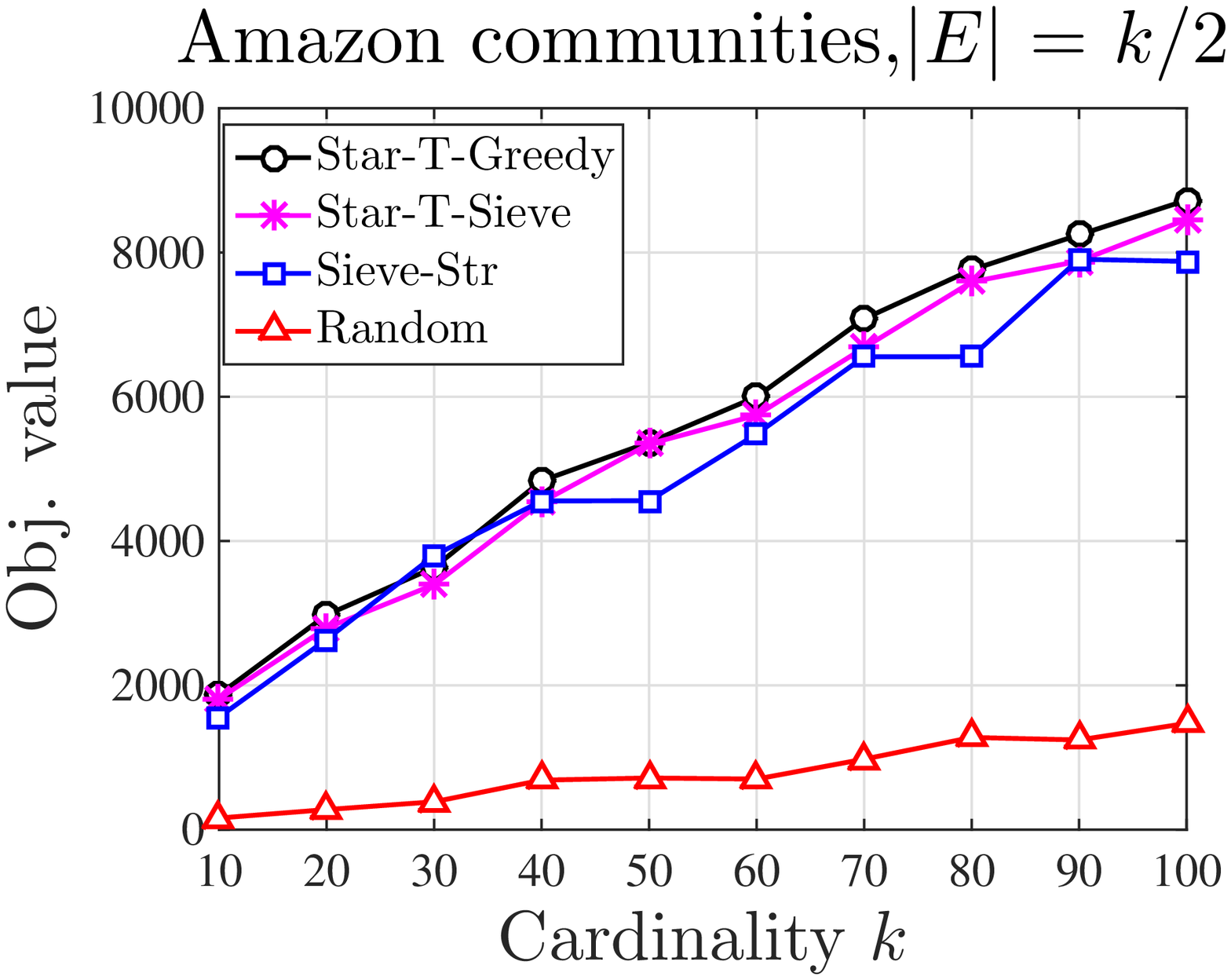}
\endminipage
\minipage{0.34\textwidth}%
  \includegraphics[width=\linewidth]{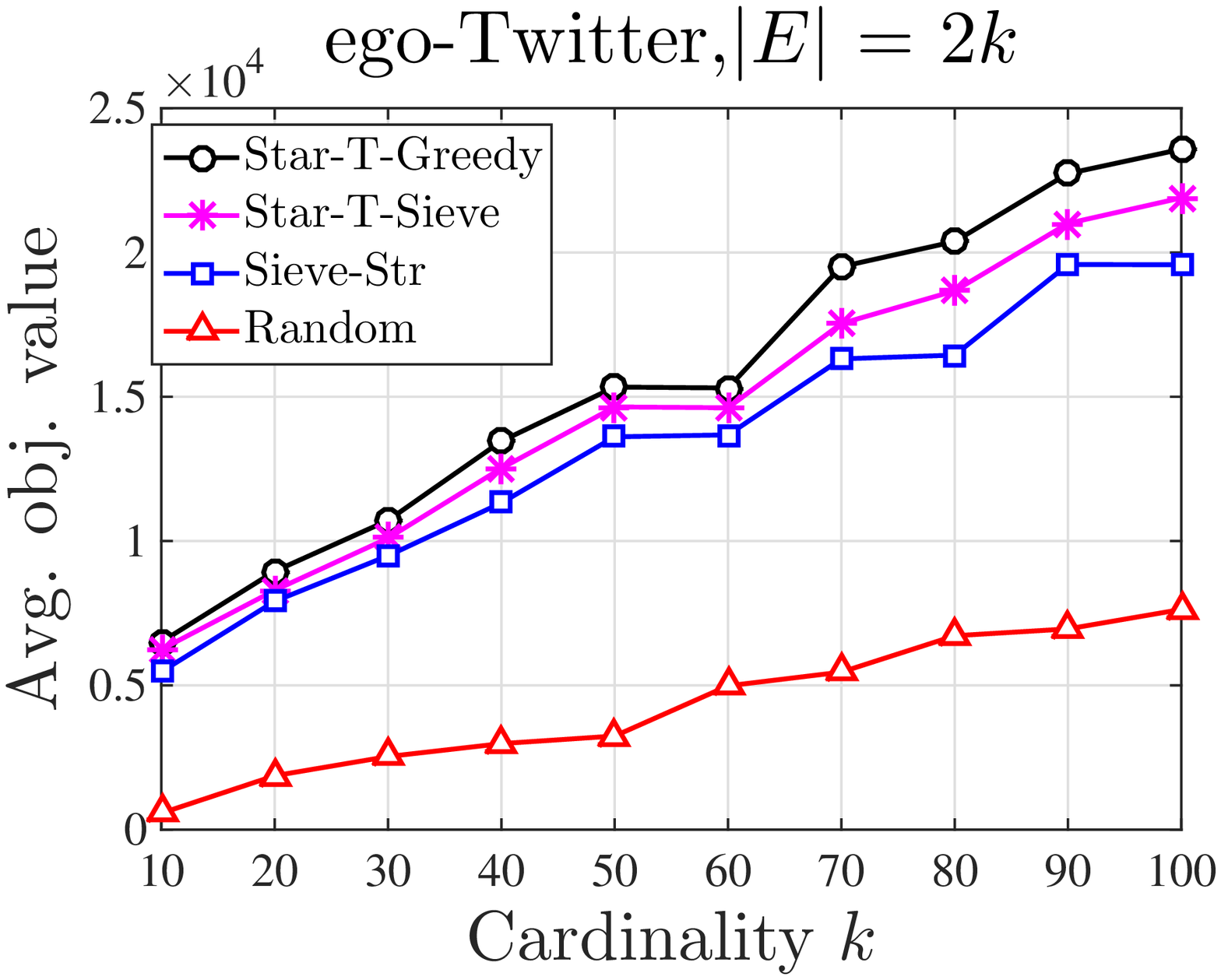}
\endminipage
\minipage{0.34\textwidth}%
  \includegraphics[width=\linewidth]{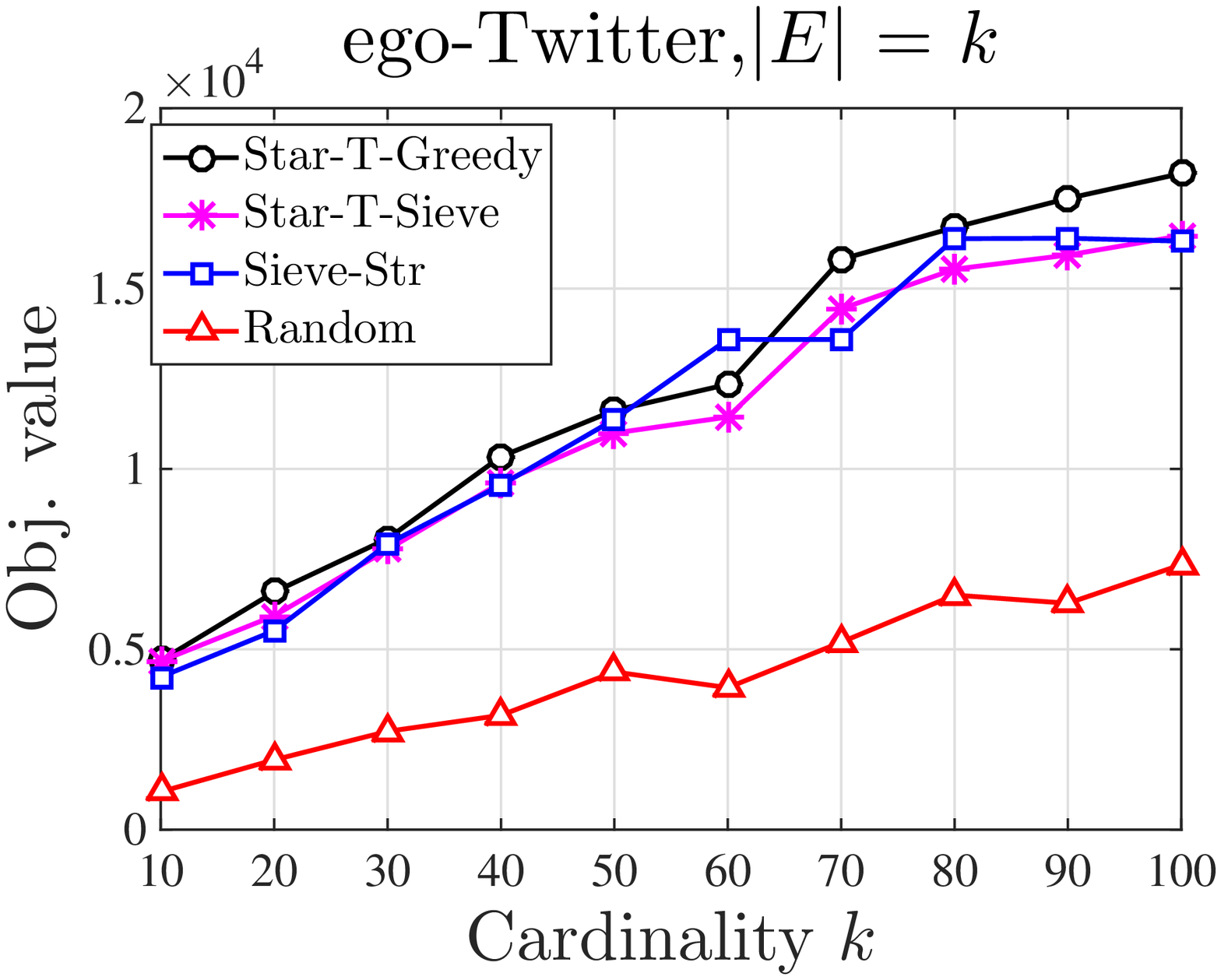}
\endminipage
\caption{Numerical comparisons of the algorithms \OALG-\GREEDY, \OALGS and \SIEVE.} 
\label{furtherplots}
\end{figure}

\end{document}